\newcommand{\knip}[1]{}
\newtheorem{observation}{Observation}
\newcommand{\sP}{{\sf P}}
\newcommand{\NP}{{\sf NP}}
\newcommand{\ie}{i.e.~}
\newcommand{\eg}{e.g.~}
\newcommand{\mc}{\mathcal}
\newcommand{\marge}[1]{}
\DeclareSymbolFont{AMSb}{U}{msb}{m}{n}
\DeclareSymbolFontAlphabet{\mathbb}{AMSb}
\newcommand{\problemIDP}{\textsc{Induced Disjoint Paths}}
\newcommand{\problemRIDP}{\textsc{Requirement Induced Disjoint Paths}}
\newcommand{\problemDP}{\textsc{Disjoint Paths}}
\newcommand{\problemMCIS}{\textsc{Multicolored Independent Set}}
\begin{document}

\title{Induced Disjoint Paths in\\ Circular-Arc Graphs in Linear Time
\thanks{This work is supported by EPSRC (EP/K025090/1) and Royal Society (JP100692).
The research leading to these results has also received funding from the European Research Council under the European Union's Seventh Framework Programme (FP/2007-2013) / ERC Grant Agreement n.\ 267959.}
}
\author{
Petr A.~Golovach\inst{1}
\and Dani\"el Paulusma\inst{2}
\and Erik Jan van Leeuwen\inst{3}
\institute{Department of informatics, University of Bergen, Norway\\
\email{petr.golovach@ii.uib.no}
\and
School of Engineering and Computer Science, Durham University, UK\\ 
\email{daniel.paulusma@durham.ac.uk}
\and 
Max-Planck Institut f\"{u}r Informatik, Saarbr\"{u}cken, Germany,
\email{erikjan@mpi-inf.mpg.de}
}}
\maketitle
\thispagestyle{plain}
\setcounter{footnote}{0}

\begin{abstract}
The \problemIDP{} problem is to test whether a graph $G$ with $k$ distinct pairs of vertices $(s_{i},t_{i})$ contains paths $P_{1},\ldots,P_{k}$ such that $P_{i}$ connects $s_{i}$ and $t_{i}$ for $i=1,\ldots,k$, and $P_{i}$ and $P_{j}$ have neither common vertices nor adjacent vertices (except perhaps their ends) for $1 \leq i < j \leq k$.
We present a linear-time algorithm for \problemIDP{} on circular-arc graphs.
For interval graphs, we exhibit a linear-time algorithm for the generalization of \problemIDP{} where the pairs $(s_{i},t_{i})$ are not necessarily distinct.
\end{abstract}

\section{Introduction}\label{s-intro}
A classic algorithmic problem on a graph $G$ with $k$ distinct pairs of vertices $(s_{i},t_{i})$ is to find vertex-disjoint
\footnote{There is also a version of the problem in which the paths are required to be edge-disjoint. We do not consider that version in this paper.} 
paths $P_{1},\ldots,P_{k}$ such that $P_{i}$ connects $s_{i}$ and $t_{i}$. Known as the \problemDP{} problem, it is \NP-complete on general graphs~\cite{Ka75}, but can be solved in $O(n^{3})$ time for any fixed integer $k$~\cite{RS95} (\ie it is fixed-parameter tractable). A generalization of this problem is \problemIDP{}: given $k$ distinct pairs of vertices $(s_{i},t_{i})$ in a graph $G$, find paths $P_{1},\ldots,P_{k}$ such that $P_{i}$ connects $s_{i}$ and $t_{i}$ for $i=1,\ldots,k$ and the paths are \emph{mutually induced}, that is, no two paths $P_{i},P_{j}$ have common or adjacent vertices (except perhaps their end-vertices). The \problemIDP{} problem indeed generalizes the \problemDP{} problem, since the latter can be reduced to the former by subdividing every edge of the graph. This makes the problem much harder: \problemIDP{} is \NP-complete even for instances with $k=2$~\cite{Bi91,Fe89}, and thus in particular is not fixed-parameter tractable unless \sP=\NP.

The hardness of both \problemDP{} and \problemIDP{} on general graphs inspired research on their complexity on structured graph classes. On the negative side, \problemDP{} remains \NP-complete on line graphs~\cite{Ly75} and split graphs~\cite{HvtHSvL2014}, \problemIDP{} remains \NP-complete on claw-free graphs~\cite{FKLP12}, and both problems remain \NP-complete on planar graphs~\cite{KvL84,GPV12}. In these cases, however, fixed-parameter algorithms are known~\cite{GPV12b,HvtHSvL2014,KK12,RRSS93,RS95}. On the positive side, polynomial-time algorithms for \problemDP{} exist on graphs of bounded treewidth~\cite{Reed1997} and graphs of cliquewidth at most~$2$~\cite{GurWan2006}, and for \problemIDP{} on AT-free graphs~\cite{GPV12} and chordal graphs~\cite{BGHHKP12}.

We focus on the complexity of \problemIDP{} on circular-arc graphs. Recall that a \emph{circular-arc graph} $G$ has a \emph{representation} in which each vertex of $G$ corresponds to an arc of a circle, and two vertices of $G$ are adjacent if and only if their corresponding arcs intersect. Circular-arc graphs generalize \emph{interval graphs}, which have a representation in which each vertex corresponds to an interval of the line, and two vertices are adjacent if and only if their corresponding intervals intersect. The complexity of \problemDP{} is known: it is \NP-complete already on interval graphs~\cite{NS1996}. In contrast, for \problemIDP{}, the authors of the present work recently showed a polynomial-time algorithm on circular-arc graphs~\cite{GPV12b}, and a polynomial-time algorithm on interval graphs is implied by that work, as well as by the polynomial-time algorithms on AT-free graphs~\cite{GPV12} and chordal graphs~\cite{BGHHKP12}. These algorithms, however, do not fully settle the complexity of \problemIDP{} on circular-arc graphs (and interval graphs), because the question whether a linear-time algorithm exists has been left open.

In this paper, we exhibit a linear-time algorithm for \problemIDP{} on circular-arc graphs. This improves on the known algorithm on circular-arc graphs as well as the known algorithms for interval graphs. We also introduce a generalization of \problemIDP{} called \problemRIDP{}, which is to find $r_{i}$ paths that connect $s_{i}$ and $t_{i}$ for $i=1,\ldots,k$, such that all paths are mutually induced. We present a linear-time algorithm for \problemRIDP{} on interval graphs. To solve these problems, our algorithms first preprocesses the instance. Some of the preprocessing rules build on our earlier work on \problemIDP{}~\cite{GPV12,GPV12b}, but special care is required to adapt them for \problemRIDP{} and to execute them in linear time. Most preprocessing rules, however, are novel. After the preprocessing stage, the algorithms identify a set of candidate paths for each pair $(s_{i},t_{i})$. For each candidate path for a pair $(s_{i},t_{i})$, we add an arc with color $i$ that corresponds to the path to an auxiliary graph. Finally, we show that it suffices to find an independent set in this auxiliary graph that contains $r_{i}$ arcs of each color. We show that the algorithms perform all stages in linear time.

\section{Preliminaries}\label{s-pre}
We only consider finite undirected graphs that have no loops and no multiple edges. 
We refer to the textbook of Diestel~\cite{Di05} for any standard graph terminology not defined here. Let $G=(V,E)$ be a graph. 
For a set $S\subseteq V$, the graph $G[S]$ denotes the subgraph of $G$ {\it induced by} $S$, that is, the graph with vertex set $S$ and edge set $\{uv \in E \mid  u,v\in S\}$.  We write $G-S=G[V\setminus S]$. 
We denote the (open) neighborhood of a vertex $u$ by $N_G(u)=\{v \mid uv\in E\}$ and its closed neighborhood by $N_G[u]=N_G(u)\cup \{u\}$.
We denote the neighborhood of a set $U\subseteq V$ by $N_G(U)=\{v\in V\setminus U \mid  uv\in E\; \mbox{for some}\; u\in U\}$ and $N_G[U]=U\cup N_G(U)$. 
We denote the degree of a vertex $u$ by $\deg_G(u)=|N_G(u)|$.

We denote an unordered pair of elements $x,y$ by $\{x,y\}$  (\ie $\{x,y\}=\{y,x\}$).

\paragraph{Problem Definition}
Let $P=v_1\cdots v_r$ be a path (we call such a path a {\em $v_1v_r$-path}).
The vertices $v_1$ and $v_r$  are the {\it ends} or {\it end-vertices} of $P$, and the vertices $v_2,\ldots,v_{r-1}$ are the \emph{inner vertices} of $P$.
We say that an edge $v_iv_j$, $i+1<j$, is an \emph{inner chord} of $P$ if $v_i$ or $v_j$ is an inner vertex of $P$.
Distinct paths $P_1,\ldots, P_{\ell}$ in a graph $G$  are {\it mutually induced} if:
\begin{itemize}
\item [(i)] each $P_i$ has no inner chords;
\item [(ii)]  any distinct $P_i,P_j$ may only share vertices that are ends of both paths;
\item [(iii)]  no inner vertex $u$ of any $P_i$ is adjacent to a vertex $v$ of some $P_j$ for $j\neq i$, except when $v$ is an end-vertex of both $P_i$ and $P_{j}$.
\end{itemize}
Notice that condition~(i) may be assumed without loss of generality. This definition is more general than the definition in Section~\ref{s-intro}, as it allows the end-vertices of distinct paths to be the same or adjacent. 
We can now formally state our decision problem (where a {\em terminal} is some specified vertex).
\begin{center}
\begin{boxedminipage}{1\textwidth}
\ \problemRIDP{}\\[3mm]
\begin{tabular}{ r l }
\textit{~~~~Instance:} & a graph $G$, 
 $k$  pairs of distinct terminals $(s_1,t_1),\ldots,(s_k,t_k)$ such\\ 
         & that $\{s_i,t_i\}\neq\{s_j,t_j\}$ for $0\leq i<j\leq k$, and $k$ positive integers\\
        & $r_1,\ldots,r_k$.\\
\textit{Question:} & does $G$ have  $\ell=r_1+\ldots+r_k$ mutually induced  paths $P_1,\ldots,P_\ell$ \\&
        such that exactly $r_i$ of these paths join $s_i$ and $t_i$ for $1\leq i\leq k$?  \\
\end{tabular}
\end{boxedminipage}
\end{center}
If $r_1=\ldots=r_k=1$, then  the problem is called \problemIDP{}. 
The paths $P_1,\ldots,P_\ell$ are said to form a {\em solution} for a given instance, and we call every such path a {\em solution path}.

The problem definition allows a vertex $v$ to be a terminal in two or more pairs $(s_i,t_i)$ and $(s_j,t_j)$. For instance, $v=s_i=s_j$ is possible.
This corresponds to property (ii) of our definition of ``being mutually induced''. 
In order to avoid any confusion, we will view $s_i$ and $s_j$ as two different terminals ``placed on'' vertex~$v$.
Formally, we call $v$ a {\em terminal vertex} that
{\em represents} a terminal $s_i$ or $t_i$ if $u=s_i$ or $u=t_i$, respectively.
We let $T_v$  denote the set of terminals represented by~$v$. If $T_v=\emptyset$, we call $v$ a {\em non-terminal} vertex. 
We say that the two terminals $s_i$ and~$t_i$ of a terminal pair $(s_i,t_i)$ are {\em partners} of each other. 
If $s_i$ is represented by $u$ and $t_i$ by $v$, then we also call a $uv$-path an $s_it_i$-path.
By our problem definition, each terminal pair $(s_i,t_i)$ consists of two distinct terminals. Hence, two partners are never represented by the same vertex.

By Property~(i), each solution path $P$  has no inner chords. It is an induced path if and only if its ends are non-adjacent.
If two adjacent vertices $u$ and $v$ represent terminals vertices belonging to the same pair $(s_i,t_i)$, then the path $uv$ is called
a {\em terminal path} for $s_i$, $t_i$.
We need the following observation.

\begin{observation}\label{o-trivial}
Any yes-instance of \problemRIDP{} has a solution that contains 
all terminal paths. In particular, a terminal path for a pair $(s_i,t_i)$ is the unique $s_it_i$-path in this solution if $r_i=1$.
\end{observation}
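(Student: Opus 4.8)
The plan is to prove the observation by an exchange argument: I start from an arbitrary solution and, for each pair whose two terminals are adjacent, replace one of its solution paths by the corresponding terminal path, showing that each such replacement preserves mutual inducedness. Let $\mathcal{S}=\{P_1,\dots,P_\ell\}$ be a solution, and let $(s_i,t_i)$ be a pair whose representing vertices $u$ and $v$ are adjacent, so that the edge $e=uv$ is a terminal path for $(s_i,t_i)$. The $r_i$ paths of $\mathcal{S}$ that join $s_i$ and $t_i$ are exactly the $uv$-paths of $\mathcal{S}$ (any $uv$-path joins $s_i$ and $t_i$, and there are $r_i$ of them); if one of them already equals $e$ there is nothing to do, and otherwise I pick any such path $P$ and obtain $\mathcal{S}'$ from $\mathcal{S}$ by deleting $P$ and inserting $e$.

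The reason this swap is safe is that $V(e)=\{u,v\}\subseteq V(P)$, that $e$ and $P$ have the same two ends $u,v$, and that $e$ has no inner vertices. I would verify the three conditions of being mutually induced for $\mathcal{S}'$. Condition~(i) is trivial, as $e$ is a single edge. For condition~(ii), any vertex shared by $e$ and another path $P_j$ lies in $V(P)\cap V(P_j)$, which in $\mathcal{S}$ was a common end of $P$ and $P_j$; since $u$ and $v$ are precisely the ends of $e$, it is a common end of $e$ and $P_j$ as well. For condition~(iii), $e$ has no inner vertices, so the only thing to check is an inner vertex $w$ of some $P_j$ that is adjacent to a vertex $x$ of $e$: because $x\in\{u,v\}\subseteq V(P)$, this adjacency already occurred in $\mathcal{S}$, forcing $x$ to be a common end of $P$ and $P_j$, hence of $e$ and $P_j$. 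All conditions among the unchanged paths are inherited directly from $\mathcal{S}$. Thus $\mathcal{S}'$ is mutually induced, and since $e$ is again a $uv$-path it joins exactly the same terminal pairs as $P$ did, so $\mathcal{S}'$ still has $r_j$ paths joining the terminals of each pair and is again a solution. Distinctness is preserved because, after the swap, $e$ is the unique edge-path among the $r_i$ of them: the path we removed was not $e$, and the remaining ones were not $e$ either.

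Iterating this replacement over all pairs with adjacent terminals yields a solution containing every terminal path. Since each swap only deletes inner vertices of a path while keeping its ends, it can never introduce a forbidden adjacency that would obstruct a later swap, so the process terminates with the desired solution. When $r_i=1$, the single path joining $s_i$ and $t_i$ has been turned into $e$, which gives the ``unique $s_it_i$-path'' part of the statement. The only step needing genuine care is the verification of condition~(iii) after the swap, \ie arguing that shortcutting a path to the edge between its own two ends cannot create a new illegal adjacency; this is exactly where the facts that $e$ has no inner vertices and that $e$ keeps the ends of $P$ do the work, and I expect it to be the only nontrivial point of the argument.
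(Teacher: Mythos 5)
The paper states this observation without any proof at all---it is presented as self-evident---so there is nothing to compare against line by line; the relevant question is only whether your argument is correct, and it is. Your exchange argument is the natural way to justify the claim: since the terminal path $e=uv$ satisfies $V(e)=\{u,v\}\subseteq V(P)$ and has the same ends as the path $P$ it replaces and no inner vertices, every shared vertex or adjacency involving $e$ in the new family already occurred as a shared vertex or adjacency involving $P$ in the old one, and the exception clauses of conditions (ii) and (iii) transfer because the ends of $e$ are exactly the ends of $P$. That is precisely the ``only nontrivial point'' you flag, and you handle it correctly; the iteration over all pairs with adjacent representing vertices is unproblematic because each swap produces another valid solution.

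One small imprecision worth noting: your parenthetical claim that the $r_i$ paths joining $s_i$ and $t_i$ are \emph{exactly} the $uv$-paths of $\mathcal{S}$ is not forced by the problem definition, since a second pair $(s_j,t_j)$ could in principle be represented by the same two vertices $u$ and $v$ (the paper explicitly allows several terminals to be placed on one vertex, and only requires $\{s_i,t_i\}\neq\{s_j,t_j\}$ as pairs of terminal labels). This does not damage your argument---you only need to select one of the $r_i$ paths designated for $(s_i,t_i)$ that is not already the edge $uv$, and the swap goes through verbatim---but the cleaner phrasing is to work with the designated $s_it_i$-paths of the solution rather than with all $uv$-paths. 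The paper glosses over this same degenerate situation, so this is a presentational point rather than a gap.
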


\paragraph{Graph Classes}
Recall the definition of circular-arc and interval graphs from the introduction. Both graph types can be recognized in linear time and a corresponding representation can be found in linear time:

\begin{theorem}[\cite{BoothL76}, see also~\cite{HabibMCPV00,KorteM89}]\label{t-bl}
An interval graph $G$ with $n$ vertices and $m$ edges can be recognized in $O(n+m)$ time. In the same time, a representation of $G$ can be constructed with interval end-points 
$1,\ldots,2n$.   
\end{theorem}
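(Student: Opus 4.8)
The plan is to follow the classical route through the consecutive-ones property and the PQ-tree data structure. The starting point is the Fulkerson--Gross characterization: a graph $G$ is an interval graph if and only if its maximal cliques admit a linear ordering $C_1,\ldots,C_q$ such that, for every vertex $v$, the cliques containing $v$ form a contiguous block. Once such an ordering is available, assigning to each vertex $v$ the interval whose left end is the index of the first clique containing $v$ and whose right end is the index of the last such clique yields a valid representation. Hence the whole task reduces to (a) listing the maximal cliques and building the clique--vertex incidence matrix $M$, and (b) deciding whether the columns of $M$ can be permuted so that the $1$'s in every row become consecutive, and if so producing one such permutation.

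For step (a) I would exploit that interval graphs are chordal, so $G$ has at most $n$ maximal cliques and a perfect elimination ordering can be computed in $O(n+m)$ time, for instance by LexBFS or maximum cardinality search. Sweeping this ordering produces all maximal cliques together with the matrix $M$, whose total number of $1$-entries is $O(n+m)$. If $G$ turns out not to be chordal during this phase, we immediately reject, since every interval graph is chordal.

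For step (b) the core tool is the PQ-tree of Booth and Lueker, which compactly maintains the set of \emph{all} admissible column permutations. Processing the rows of $M$ one at a time, each row triggers a reduction that restricts the tree to exactly those permutations in which the current vertex's cliques are consecutive; if at some point this set becomes empty, we report that $G$ is not an interval graph. A frontier of the final PQ-tree then yields a valid ordering $C_1,\ldots,C_q$, from which the interval endpoints are read off as described above. To force the endpoints to be precisely $1,\ldots,2n$, I would collect the $2n$ endpoints, sort them with a counting or radix sort (breaking ties arbitrarily between coinciding left/right ends) to stay within linear time, and relabel them $1,\ldots,2n$ in a way that preserves their relative order and hence the containment pattern.

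The main obstacle is squeezing step (b) into linear time. The PQ-tree reduction is driven by a family of template replacement rules, and a naive implementation would spend time proportional to the entire current tree on each row, giving only a quadratic bound. The delicate part---and the real content of Booth and Lueker's contribution---is the amortized analysis showing that the ``bubble-up'' phase together with the template matching touches only a number of nodes proportional to the size of the current row, so that the cost summed over all rows is $O(n+m)$. Reproducing this bound, and implementing the careful pointer bookkeeping it relies on, is where essentially all the difficulty lies; the alternative LexBFS-based and partition-refinement routes of~\cite{HabibMCPV00,KorteM89} achieve the same linear bound while sidestepping the PQ-tree machinery.
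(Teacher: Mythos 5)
The paper does not prove this statement at all: Theorem~\ref{t-bl} is imported as a black box from Booth and Lueker~\cite{BoothL76} (with \cite{HabibMCPV00,KorteM89} as alternative sources), so there is no internal proof to compare yours against. Your outline is a faithful summary of exactly the route taken in the cited source --- Fulkerson--Gross clique orderings, chordality and a perfect elimination ordering for enumerating the at most $n$ maximal cliques, and a PQ-tree reduction for the consecutive-ones test --- so at the level of strategy you are reproducing the right argument.

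That said, two points keep this from being a proof rather than a plan. First, as you yourself concede, the entire content of the theorem is the amortized bound showing that the sum of the PQ-tree reduction costs over all rows is $O(n+m)$; asserting that ``the bubble-up phase together with the template matching touches only a number of nodes proportional to the size of the current row'' without the charging argument behind it leaves the theorem unproved --- everything else in your sketch is routine. Second, there is a concrete error in the final normalization step: when expanding the clique-index intervals to distinct endpoints $1,\ldots,2n$, ties between coinciding endpoints at the same clique index \emph{cannot} be broken arbitrarily. If $r_u$ and $l_v$ both equal the index $j$ of a clique containing both $u$ and $v$, placing $r_u$ before $l_v$ destroys their intersection and hence their adjacency in the representation. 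The tie-breaking rule must place every left endpoint at a given index before every right endpoint at that index (ties among left endpoints only, or among right endpoints only, are harmless). With that fix and the amortized analysis supplied, your argument would indeed establish the theorem along the same lines as the cited work.
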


The first linear-time recognition algorithm for circular-arc graphs was given by McConnell~\cite{M2003} (see also~\cite{KaplanN11}).

\begin{theorem}[\cite{M2003}]\label{t-mcc}
A circular-arc graph $G$ with $n$ vertices and $m$ edges can be recognized in $O(n+m)$ time. In the same time, a representation of $G$ can be constructed with arc end-points 
clockwise enumerated as $1,\ldots,2n$.   
\end{theorem}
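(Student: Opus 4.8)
The plan is to reduce the recognition of circular-arc graphs to the linear-time recognition of interval graphs guaranteed by Theorem~\ref{t-bl}, exploiting the one structural observation that underlies every known algorithm: if $G$ has a circular-arc representation and $p$ is any point of the circle, then the arcs through $p$ form a clique $C$, while cutting the circle at $p$ and unrolling it turns the arcs that avoid $p$ into intervals of a line. Hence $G-C$ is an interval graph for a suitable clique $C$, and conversely a circular-arc representation of $G$ can be reassembled from an interval representation of $G-C$ together with the information of how the arcs of $C$ wrap around. The entire difficulty is therefore concentrated in identifying a correct cutting clique $C$, and in performing this identification, the interval-recognition call, and the final assembly all within $O(n+m)$ time.

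First I would dispose of the easy regimes. If $G$ is disconnected or is a clique it is trivially a circular-arc graph, and if $G$ itself is an interval graph then Theorem~\ref{t-bl} already yields a model (an interval representation placed on an arc of the circle that leaves a free gap is a circular-arc representation). To shrink the remaining cases I would compute the modular decomposition of $G$, which is available in $O(n+m)$ time, and reduce to the prime quotients: a module can be represented inside the arc of the single vertex that represents it in the quotient, so circular-arc-ness is governed by the behaviour of the prime nodes. This leaves the genuinely hard core, namely a prime graph that is not an interval graph but may still be circular-arc.

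For this core I would pass to the matrix formulation that generalises the consecutive-ones property used for interval graphs. Between two clockwise-consecutive endpoints the set of arcs covering that ``gap'' is a clique, and each arc covers a circularly-consecutive block of the $2n$ gaps; thus the vertex--gap incidence matrix has the \emph{circular-ones property}, the circular analogue of the consecutive-ones property that characterises interval graphs. I would compute, in linear time, the $O(n)$ candidate gap cliques, build the corresponding sparse $0/1$ matrix, and test it for the circular-ones property using a PC-tree, the circular counterpart of the PQ-tree. A successful test yields a circular ordering of the endpoints, while a failure yields a forbidden substructure certifying that $G$ is not a circular-arc graph. This is the step I expect to be the main obstacle, and it is where McConnell's genuine contribution lies: unlike the interval case, the cut point---equivalently the clique $C$ above---is not forced, so one cannot simply enumerate maximal cliques (their number need not be linear), and the circular-ones test must be driven directly from an $O(n+m)$-size certificate structure rather than from an explicit clique list.

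Finally, from the circular ordering of endpoints produced by the test I would read off, for each vertex, the clockwise positions of the two ends of its arc, and assign the integer labels $1,\ldots,2n$ by a single counting sort of the $2n$ endpoints around the circle, giving the claimed representation in $O(n)$ additional time. Correctness follows because the circular-ones arrangement places every arc exactly over the consecutive block of gaps it must cover, so two arcs intersect precisely when their vertices are adjacent; and the overall running time is $O(n+m)$ because each stage---modular decomposition, the interval and circular-ones recognition, and the endpoint sort---is linear.
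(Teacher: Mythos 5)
First, a point of comparison: the paper does not prove this statement at all --- it is imported wholesale from McConnell~\cite{M2003} (see also~\cite{KaplanN11}), so your attempt has to be judged as a reconstruction of that algorithm, and as such it has a genuine gap at its center. Your plan reduces recognition to testing the \emph{circular-ones property} of a vertex--gap incidence matrix, but the ``gap cliques'' are defined in terms of a circular-arc representation that you do not yet have; the proposal is circular. In the interval case the analogous columns can be computed from $G$ alone because interval graphs are chordal, have at most $n$ maximal cliques, and (by the Helly property of intervals) the clique matrix with consecutive ones characterizes the class (Fulkerson--Gross). None of this transfers: circular-arc graphs need not be chordal, can have exponentially many maximal cliques, and arcs do not satisfy the Helly property, so the gap cliques of a representation need not be maximal cliques and are not recoverable from $G$ by any known direct means. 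Indeed, circular-ones of a clique matrix characterizes the strictly smaller class of \emph{Helly} circular-arc graphs (Gavril); the octahedron $K_{2,2,2}$ is circular-arc but admits no such matrix, so your test, if driven by any clique-based column set, would wrongly reject it. Your sentence that the test ``must be driven directly from an $O(n+m)$-size certificate structure'' is precisely the theorem to be proved, not a proof step: McConnell's actual algorithm does not run a PC-tree on candidate cliques but normalizes a hypothetical representation, computes the pairwise \emph{intersection types} of arcs (crossing versus containment versus covering the circle jointly), and uses arc ``flipping'' (complementation) to reduce to consecutive-arrangement problems on modified graphs --- this machinery is absent from your sketch.

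Two of your side claims are also false and would break the surrounding reductions. A disconnected graph is \emph{not} trivially circular-arc: in any representation the arcs of a connected component cover either a proper arc or the whole circle, so if $G$ is disconnected every component must be an interval graph; two disjoint copies of $C_4$ form a non-circular-arc graph. And circular-arc graphs are not closed under modular substitution, so you cannot ``represent a module inside the arc of its representative'': an outside arc that overlaps only part of the representative's arc would miss module vertices placed elsewhere in it. The same failure already occurs for interval graphs --- substituting a $2K_1$ module into the middle vertex of $P_3$ yields $C_4$, which is not interval --- so the proposed reduction to prime quotient graphs is unsound as stated. In short, the easy regimes are handled incorrectly and the hard core is assumed rather than solved.
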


By Theorems~\ref{t-bl} and \ref{t-mcc}, we always assume that an interval or circular-arc graph is given both by its adjacency list and its representation. Moreover, we assume that all the end-points of the intervals/arcs in the representation are distinct integers $1,\ldots,2n$. 
Notice that using a representation we can check adjacency in $O(1)$ time. 
By slight abuse of notation, we often do not distinguish between the vertices and their corresponding intervals/arcs, \eg we may speak of terminal intervals/arcs instead of terminal vertices.

For a vertex $u$ of an interval graph, $l_u$ and $r_u$ denote the left and right end-point of $u$, respectively; note that the degree of $u$ is at least $(r_u-l_u-1)/2$.
For circular-arc graphs, we equate ``left'' to ``counterclockwise'' and ``right'' to ``clockwise''. Then, in the same way as for interval graphs,  we let $l_u$ and $r_u$ denote the left and right end-point of a vertex $u$, respectively. In this way we are able to define similar terminology for both interval and circular-arc graphs. 
For two points $x,y$ on the line or circle, we write $x\leq y$ if $y$ lies to the right with respect to $x$, and $x<y$ if $x\leq  y$ and $x\neq y$.
We say that a point $z$ \emph{lies between} points $x$ and $y$, if $x\leq z\leq y$. 
We say that a vertex $u$ {\em lies between} points $x$ and $y$ if $x\leq l_u < r_u \leq y$ (recall that $l_u$ and $r_u$ are distinct integers).
Finally, a vertex $u$ \emph{lies between} two other vertices $v,w$ if it lies between $r_v$ and $l_w$; note that 
in that case we have in fact that $r_v< l_u< r_u< l_w$ by our assumption on the interval representation.

An {\it independent set} in a graph $G$ is a set of vertices that are pairwise non-adjacent.  
At some stage, our algorithm for \problemIDP{} on circular-arc graphs needs to compute a largest independent set of a circular-arc graph. This takes linear time:

\begin{theorem}[\cite{GolumbicH88}]\label{t-maxin}
If the arc end-points of a circular-arc graph $G$ are sorted,  then
a largest independent set of $G$ can be found in $O(n)$ time.
\end{theorem}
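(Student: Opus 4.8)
The plan is to reduce the problem to the interval-graph case and to base everything on the classical greedy ``activity-selection'' rule. I would first establish the base case: if $G$ is an interval graph with sorted endpoints, then a maximum independent set is obtained by repeatedly selecting a not-yet-forbidden interval $u$ with the smallest right end-point $r_u$, adding $u$ to the solution, and forbidding every interval intersecting $u$. A standard exchange argument proves optimality: in any maximum independent set, the interval with smallest right end-point can be replaced by the greedily chosen one without decreasing the size, and one then recurses on the intervals lying strictly to the right. Since the end-points are sorted and encoded as $1,\ldots,2n$, this whole process is realized by a single left-to-right sweep that maintains the largest forbidden right end-point, so the interval case runs in $O(n)$ time.

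For a general circular-arc graph, the plan is to remove the circularity by cutting the circle at a point $p$. The key structural fact is that any independent set contains \emph{at most one} arc covering $p$, since two arcs through $p$ necessarily intersect. This yields a dichotomy for a maximum independent set $I$: either (a) $I$ contains no arc through $p$, in which case cutting the circle at $p$ turns the arcs avoiding $p$ into intervals on a line and the base case applies directly; or (b) $I$ contains exactly one arc $A$ through $p$, in which case $I\setminus\{A\}$ consists of arcs disjoint from $A$, all of which avoid $p$ and hence again form an interval graph on which the base case applies, after which we add $1$ for $A$.

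The main obstacle is to carry this out in \emph{linear} rather than quadratic time, and this is where the wraparound genuinely bites. If some point $p$ of the circle is left uncovered by every arc, then cutting at $p$ makes case~(b) vacuous and a single sweep suffices; detecting such a $p$ and performing the sweep is clearly $O(n)$. The difficulty is concentrated in the case where the arcs cover the entire circle, for then every candidate reference point is crossed by some arc, and naively iterating case~(b) over all arcs $A$ through $p$ — rerunning an interval computation for each — costs $O(n^2)$. The crux of the argument is therefore to resolve case~(b) with only a constant number of linear sweeps: one chooses the reference point structurally (for instance just clockwise of the right end-point of a well-chosen arc) so that the arcs potentially playing the role of $A$ can be processed together, and an amortized analysis shows that the total work across all candidate arcs is still $O(n)$. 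I expect this bookkeeping for the fully-covered case, rather than the optimality of the underlying greedy, to be the part demanding the most care.
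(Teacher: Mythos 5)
First, a point of context: the paper does not prove this statement at all --- Theorem~\ref{t-maxin} is imported verbatim from Golumbic and Hammer~\cite{GolumbicH88} and used as a black box in Step~\sten{}. So there is no in-paper proof to compare against, and your proposal has to be judged as a free-standing argument for the cited result.

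As such an argument, it has a genuine gap, and the gap sits exactly where the theorem stops being routine. Your base case (greedy by earliest right end-point on intervals) and your dichotomy (a maximum independent set contains at most one arc through a reference point $p$, so either none does and the problem is an interval problem, or exactly one arc $A$ does and the rest is an interval problem on the arcs disjoint from $A$) are both correct. But this reduction, taken literally, requires one interval computation per candidate arc $A$ through $p$, i.e.\ $O(n\cdot|D_p|)$ time where $D_p$ is the set of arcs covering $p$. You correctly identify that when the arcs cover the whole circle there is no choice of $p$ making $|D_p|$ small (indeed $|D_p|$ can be $\Theta(n)$ for every $p$), and then you assert that ``a constant number of linear sweeps'' and ``an amortized analysis'' handle all candidates together --- but you never exhibit the sweep, the invariant, or the amortization. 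That assertion \emph{is} the theorem: without it your construction proves only an $O(n^2)$ bound, which is precisely the kind of bound the paper is using Theorem~\ref{t-maxin} to beat. To close the gap you would need a concrete device, e.g.\ precomputing for every endpoint the ``next arc chosen by the greedy'' pointer and showing that the length of the greedy chain inside each window $(r_A,\,l_A)$ (the complement of a candidate arc $A$) can be read off for all $A\in D_p$ in $O(n)$ total time, or the containment-elimination plus single circular sweep used by Golumbic and Hammer. Naming where the difficulty lives is not the same as resolving it, so as written the proof is incomplete.
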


\section{Interval Graphs}\label{s-int}
In this section we develop a linear-time algorithm that solves \problemRIDP{} on interval graphs. A possible approach would be the following greedy algorithm: find a terminal vertex with the leftmost right end-point, trace path(s) for the corresponding terminal pairs, greedily choose the non-terminal vertex with the leftmost right end-point that does not create conflicts with vertices already chosen, and proceed in a greedy way. However, we do not elaborate on this approach for two reasons. Firstly, this approach would require a thorough case analysis (just like our algorithm, and thus not be substantially simpler). Secondly, and more importantly, the goal of this paper is to design a linear-time algorithm for \problemIDP{} on circular-arc graphs, where we have no natural starting point for a similar greedy approach and guessing such a starting point would irrevocably lead to a quadratic-time algorithm. Therefore, we present a different approach already for interval graphs.

We describe the main constructs of our algorithm. Consider an instance  of \problemRIDP{}.
Let $P$ be an $s_it_i$-path that is not a terminal path, \ie that has at least one inner vertex. 
Let $I_P$ be the interval on the line obtained by taking the union of the intervals that correspond to the inner vertices of $P$. We say that $P$ \emph{covers} the interval $I_P$. Because $P$ is an $s_it_i$-path, we say that $I_{P}$ has color $i$.

\begin{lemma}\label{l-cover}
Let $P_1,\ldots, P_\ell$ form a solution. The following statements hold:
\begin{itemize}
\item[i)] For $1\leq i\leq k$, any interval $I_{P_a}$ with color $i$ intersects the intervals that represent $s_{i}$ and $t_{i}$ and does not intersect any other terminal interval;
\item[ii)] For $1\leq a<b\leq \ell$, $I_{P_a}\cap I_{P_b}=\emptyset$;
\item[iii)] For $1\leq i<j \leq k$, there is no interval with color $j$ that lies between two intervals with color $i$, or vice versa.
\end{itemize}
\end{lemma}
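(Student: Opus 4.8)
The plan is to reduce all three statements to the definition of mutually induced paths together with elementary interval geometry, using that two vertices of an interval graph are adjacent exactly when their intervals intersect. Throughout I would use that the inner vertices of a non-terminal $s_it_i$-path $P_a$ form a consecutive chain of pairwise-overlapping intervals, so their union $I_{P_a}$ really is a single non-empty interval, and that any vertex whose interval meets $I_{P_a}$ is equal or adjacent to some inner vertex of $P_a$.

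First I would settle the two ``local'' statements (i) and (ii). For (i), the first inner vertex of $P_a$ is adjacent to the end representing $s_i$ and the last inner vertex is adjacent to the end representing $t_i$, so $I_{P_a}$ meets both of these terminal intervals. If, conversely, $I_{P_a}$ met the interval of some terminal vertex $w$ other than the two ends representing $s_i$ and $t_i$, then some inner vertex $u$ of $P_a$ would be equal or adjacent to $w$. Since $w$ represents a terminal, it is an end of some solution path $P_b$; and since $w$ is not an end of $P_a$, we have $b\neq a$. Then either $u=w$, so that $w$ is a vertex shared by $P_a$ and $P_b$ but not an end of $P_a$, contradicting property~(ii) of being mutually induced; or $u\neq w$, so that the inner vertex $u$ of $P_a$ is adjacent to the vertex $w$ of $P_b$ without $w$ being a common end, contradicting property~(iii). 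Statement (ii) is proved the same way: if $I_{P_a}$ and $I_{P_b}$ met for $a\neq b$, some inner vertex of $P_a$ would be equal or adjacent to an inner vertex of $P_b$, again contradicting property~(ii) or~(iii). (When one of the two paths is a terminal path its covered interval is empty, so the statement is trivial in that case.)

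The substantive part is (iii), which I would prove by a sandwiching argument. Suppose for contradiction that a color-$j$ interval $B=I_{P_b}$ lies between two color-$i$ intervals $A$ and $C$, with $A$ entirely left of $B$ and $B$ entirely left of $C$. By statement (i), both $A$ and $C$ meet the interval representing $s_i$; the witnessing point in $A$ lies left of $B$ and the one in $C$ lies right of $B$, so the connected interval representing $s_i$ contains both points and therefore contains all of $B$. The identical argument with $t_i$ shows the interval representing $t_i$ also contains $B$. Hence the non-empty interval $B$ meets the two \emph{distinct} terminal intervals representing $s_i$ and $t_i$; but $B$ has color $j$, so statement (i) applied to $B$ forces every terminal interval it meets to represent $s_j$ or $t_j$. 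Thus each of the ends representing $s_i$ and $t_i$ must be the vertex representing $s_j$ or $t_j$; as these two ends are distinct vertices, they are exactly the two vertices representing $s_j$ and $t_j$, \ie the pairs $(s_i,t_i)$ and $(s_j,t_j)$ sit on the same two vertices, contradicting the instance requirement $\{s_i,t_i\}\neq\{s_j,t_j\}$. The ``vice versa'' case follows by exchanging $i$ and $j$.

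I expect the main obstacle to be precisely this collision with the degenerate configurations the problem allows, in which one vertex may carry terminals of several pairs (for instance $s_i=s_j$). The naive conclusion ``$B$ meets a foreign terminal, contradicting statement~(i)'' is wrong as soon as that terminal happens to represent $s_j$ or $t_j$; the reason for running the argument through for \emph{both} $s_i$ and $t_i$ is that it pins down the \emph{whole} vertex pair of color $i$ as that of color $j$, which is exactly what $\{s_i,t_i\}\neq\{s_j,t_j\}$ rules out. A small point worth checking first is that each $I_P$ is a single interval, which is what lets me speak of a witnessing point to the left or to the right of $B$.
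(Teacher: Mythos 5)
Your proposal is correct and follows essentially the same route as the paper: parts i) and ii) come straight from the definition of mutually induced paths, and part iii) is the same sandwiching argument (the terminal intervals of pair $i$ must contain the color-$j$ interval, forcing the two pairs onto the same two vertices, contradicting $\{s_i,t_i\}\neq\{s_j,t_j\}$). You spell out the degenerate shared-terminal-vertex case more explicitly than the paper's terse ``$u$ or $v$ represents neither $s_j$ nor $t_j$,'' but the underlying argument is identical.
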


\begin{proof}
Properties i) and ii) follow immediately from definition. In order to show iii), assume that an interval $I_{P_c}$ with color $j$
lies between two intervals $I_{P_a}$ and $I_{P_b}$, both with color $i$, for some $i,j$ with $i\neq j$.
Let $u$ and $v$ represent $s_i$ and~$t_i$. By i), $I_{P_a}$ and $I_{P_b}$ each intersect $u$ and~$v$. 
Then $I_{P_c}$ also intersects $u$ and~$v$. As $i\neq j$, we find that $u$ or $v$ represents neither 
$s_j$ nor $t_j$, contradicting~i).\qed  
\end{proof}

We now outline our algorithm. Following Observation~\ref{o-trivial}, we take all terminal paths into the solution. This might reduce the requirement $r_{i}$ by $1$ for some $i$. To find the remaining  paths for all $i$, we determine a set of ``candidate paths'' that might or might not be used in the solution that we are constructing. The set of candidate paths is constructed such that for any $s_it_i$ solution path $P$ there is a candidate path $P'$ such that $P'$ is also an $s_it_i$-path and $I_{P'} \subseteq I_{P}$. We guarantee that the set of candidate paths has size $O(n)$. By Lemma~\ref{l-cover}, the paths that are selected in a solution must cover distinct parts of the line. Therefore, we create an auxiliary interval graph $H$ that consists of all intervals covered by the candidate paths. The intervals covered by candidate $s_it_i$-paths all receive color $i$, for $i=1,\ldots,k$. It then suffices to find an independent set with the required number of vertices of each color in $H$. 

In the remainder of this section, we describe all steps of the algorithm in detail. 
We say that a step is {\em safe} if it runs in time $O(n+m+k)$ and is correct the following sense: 
\begin{itemize}
\item [(i)] a \texttt{No}-answer is given for no-instances only;
\item [(ii)] if a new instance is obtained, then it has a solution if and only if the original instance has so.
\item [(iii)] if a set of intervals that are all colored with color~$i$ is added to $H$, then this set has size $O(n)$ and corresponds to a \emph{candidate set} of {\em candidate paths}.
\end{itemize}
The algorithm assumes that an interval representation of $G$ is known, as given by Theorem~\ref{t-bl}. It also maintains an auxiliary interval graph $H$, initially empty. Recall that any vertex that we add to $H$ will correspond to a candidate path for a solution. While adding vertices to $H$, we maintain an interval representation of $H$. Finally, the algorithm maintains a set $\mc{P}$ of paths, initially empty, which will form a solution for the instance (should it be a yes-instance). We let $T = \{s_{1},t_{1},\ldots,s_{k},t_{k}\}$ be the set of all terminals. A terminal pair $(s_i,t_i)$ is a {\em multi-pair} if $r_i\geq 2$, and a {\em simple pair} otherwise. 
The algorithm roughly consists of three stages: preprocess, construct $H$, and find an independent set.

\subsection{Stage I: Preprocess}
The only operations performed on $G$ by our algorithm are vertex deletions. Hence, the graph that we obtain after each step is still interval. For simplicity, we denote this graph by $G$ as well. 

\medskip
\noindent
{\bf Step~1.} Delete all non-terminal vertices that are adjacent to at least three terminal vertices. 

\begin{lemma}\label{l-1}
Step 1 is safe.
\end{lemma}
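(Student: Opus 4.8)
The plan is to verify the three conditions in the definition of ``safe'': the running time bound $O(n+m+k)$, condition (i) that \texttt{No}-answers are only given on no-instances (vacuous here, since Step 1 never outputs \texttt{No}), and condition (ii) that the deletion preserves the answer. Condition (iii) is also vacuous, since Step 1 does not add any intervals to $H$. So the substance is entirely in the running time and in showing that deleting a non-terminal vertex adjacent to at least three terminal vertices changes no yes-instance into a no-instance and vice versa.

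For the \emph{correctness} direction that matters, let $u$ be a non-terminal vertex adjacent to at least three terminal vertices, and let $\mc{P}$ be a solution for $G$. I would argue that no solution path can use $u$ as an inner vertex. Indeed, suppose $u$ is an inner vertex of some solution path $P_a$. By condition (iii) of being mutually induced, every vertex adjacent to $u$ must either lie on $P_a$ or be an end-vertex shared appropriately; more precisely, a terminal vertex $v$ adjacent to the inner vertex $u$ can only be tolerated if $v$ is an end of $P_a$ itself. Since $P_a$ has exactly two ends, at most two of the three-or-more terminal neighbours of $u$ can be ends of $P_a$, so the third terminal neighbour is a vertex of some path adjacent to an inner vertex of $P_a$, violating mutual inducedness. (I would also rule out $u$ being an \emph{end} of a solution path: an end-vertex of $P_a$ is non-terminal, so it is not a terminal, and it cannot be shared as the end of two paths in a way that helps—but in any case $u$ being an end still forces its inner-adjacent neighbours onto $P_a$, and the same counting on the single non-end side applies.) Hence every solution avoids $u$ entirely, so $G$ has a solution if and only if $G-u$ does, which is exactly condition (ii). Applying this repeatedly (the deletions are independent, as removing such a $u$ cannot create new terminal adjacencies) gives safety of the whole step.

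For the \emph{running time}, I would iterate over all non-terminal vertices and, for each, count its terminal neighbours, deleting it once the count reaches three. Using the representation, adjacency can be checked in $O(1)$ time (as noted after Theorem~\ref{t-mcc}), and the terminals can be marked in $O(k)$ time up front. The naive bound is to scan each vertex's neighbourhood, giving $O(\sum_u \deg_G(u)) = O(n+m)$ total, plus $O(k)$ for marking terminals; since we only ever need to detect that a vertex has at least three terminal neighbours, we may stop scanning a given neighbourhood as soon as three are found, but the crude bound $O(n+m+k)$ already suffices.

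The main obstacle I anticipate is the correctness argument, specifically the careful case analysis around condition (ii)/(iii) of mutual inducedness when $u$ happens to be an \emph{end}-vertex of a solution path and when some of $u$'s terminal neighbours are themselves ends that are legitimately shared between paths. One must confirm that the ``budget'' of two end-slots per path genuinely cannot absorb three distinct terminal neighbours of a single non-terminal vertex under any of the sharing scenarios permitted by the generalized definition in this paper (where ends may coincide or be adjacent). The running time and conditions (i),(iii) are routine.
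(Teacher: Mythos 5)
Your proof is correct and follows essentially the same route as the paper's: the key observation in both is that an inner vertex of a solution path can be adjacent to at most two terminal vertices (namely the two ends of its own path), so a non-terminal vertex with three or more terminal neighbours can never appear in any solution and may be deleted, with the running time obtained by a single scan of adjacency lists after marking terminals. Your version merely spells out in more detail the appeal to conditions (i)--(iii) of mutual inducedness and the (easy) fact that a non-terminal vertex cannot be an end of a solution path, which the paper leaves implicit.
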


\begin{proof}
Any internal vertex of a path of a solution is adjacent to at most two terminal vertices, which are the end-vertices of the path. Hence, any non-terminal vertex that is adjacent to at least three terminal vertices  cannot be used in any solution. Therefore, Step~1 is correct. In $O(n+m)$ time, we can check the neighborhood of each non-terminal vertex through the adjacency list and count the number of terminals.
\qed
\end{proof}

\medskip
\noindent
{\bf Step 2.} Check if there is a multi-pair that is represented by two non-adjacent terminal vertices. If so, then return a \texttt{No}-answer.

\begin{lemma}\label{l-2}
Step 2 is safe.
\end{lemma}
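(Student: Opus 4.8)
The plan is to check the two substantive requirements of safeness for Step~2: correctness condition~(i) and the running-time bound. Conditions~(ii) and~(iii) are vacuous here, since Step~2 either returns a \texttt{No}-answer or leaves the instance completely untouched (it adds nothing to $H$ and derives no new instance).

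First I would establish correctness. Because the step only ever outputs a \texttt{No}-answer, it suffices to show that whenever a multi-pair $(s_i,t_i)$ is represented by two non-adjacent terminal vertices, the instance is genuinely a no-instance. I would argue by contradiction: suppose such a multi-pair exists and yet the instance has a solution. Let $u$ and $v$ represent $s_i$ and $t_i$, respectively. Since $(s_i,t_i)$ is a multi-pair we have $r_i\geq 2$, so the solution contains at least two distinct $s_it_i$-paths $P_a$ and $P_b$. As $u$ and $v$ are non-adjacent, there is no terminal path for this pair, so each of $P_a$ and $P_b$ has at least one inner vertex and therefore covers a nonempty interval $I_{P_a}$ and $I_{P_b}$, both of color $i$.

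The heart of the argument is a short geometric observation that leans entirely on Lemma~\ref{l-cover}. By part~(i) of that lemma, both $I_{P_a}$ and $I_{P_b}$ intersect the intervals of $u$ and $v$. Since $u$ and $v$ are non-adjacent their intervals are disjoint; assume without loss of generality that $r_u<l_v$. Each covered interval $I_P$ is itself a single interval on the line, because the inner vertices of $P$ induce a connected interval subgraph whose intervals have a connected union. As this interval meets both $[l_u,r_u]$ and $[l_v,r_v]$, it must contain the whole segment strictly between $r_u$ and $l_v$. Consequently $I_{P_a}\cap I_{P_b}\supseteq(r_u,l_v)\neq\emptyset$, contradicting Lemma~\ref{l-cover}(ii). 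Hence no solution exists and the \texttt{No}-answer is justified. The only point requiring a little care is verifying that $I_P$ really is a single interval spanning the gap between $u$ and $v$, which is exactly what connectivity of the path secures.

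Finally, for the running time I would note that there are at most $k$ terminal pairs, and that for each multi-pair its two representing terminal vertices can be tested for adjacency in $O(1)$ time using the stored representation (as remarked after Theorem~\ref{t-bl}). Scanning all pairs therefore costs $O(k)$ time, comfortably within the allotted $O(n+m+k)$ bound, so Step~2 is safe.
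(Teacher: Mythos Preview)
Your proof is correct, but it takes a different route from the paper's. The paper argues directly from chordality: if $u$ and $v$ are non-adjacent and there are two mutually induced $s_it_i$-paths between them, then the union of their vertex sets induces a cycle of length at least~$4$ in $G$ (the mutual-inducedness conditions rule out every potential chord), which is impossible in an interval graph. You instead invoke Lemma~\ref{l-cover}: each of the two covered intervals $I_{P_a}$, $I_{P_b}$ is connected, meets both terminal intervals by part~(i), and hence must span the gap $(r_u,l_v)$; this forces $I_{P_a}\cap I_{P_b}\neq\emptyset$, contradicting part~(ii). Your argument has the virtue of reusing the geometric machinery already set up for the algorithm, while the paper's is a one-line appeal to the standard structural fact that interval graphs are chordal. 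Both are equally short; neither is materially stronger. The running-time reasoning is identical in the two proofs.
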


\begin{proof}
Step~2 is correct, because there must exist at least two solution paths between the terminal vertices of a multi-pair. If the two terminal vertices are not adjacent, the union of the vertices of these two paths induces a cycle on at least four vertices in~$G$. This is not possible in an interval graph.
Using the list of terminal pairs, Step~2 takes $O(k)$ time.\qed
\end{proof}

Suppose that we have not returned a \texttt{No}-answer after performing Step~2.
In the next step, for each multi-pair, we identify a set of paths that together with the terminal paths form all candidate paths.

\medskip
\noindent
{\bf Step~3.} For each non-terminal vertex $u$  adjacent to terminal vertices $v$ and $w$  representing multi-pair terminals $s_i$ and $t_i$, add $I_{vuw}$ with color $i$ to $V_{H}$, and delete $u$ from $G$.

\begin{lemma}\label{l-3}
Step 3 is safe. Moreover, for any multi-pair $(s_i,t_i)$, if $P$ is a solution $s_it_i$-path with at least one inner vertex, then there is a candidate $s_it_i$-path $P'$ with $I_{P'} \subseteq I_{P}$.
\end{lemma}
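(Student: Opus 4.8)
The plan is to first determine the exact shape of any solution $s_it_i$-path with inner vertices when $(s_i,t_i)$ is a multi-pair, and then argue that Step~3 records precisely these paths as candidates before deleting their inner vertices. Let $v$ and $w$ represent $s_i$ and $t_i$. Since we have passed Step~2, $v$ and $w$ are adjacent, so for any solution $s_it_i$-path $P=v u_1\cdots u_m w$ with $m\geq 1$ inner vertices the edge $vw$ closes a cycle $C=v u_1\cdots u_m w v$ of length $m+2$. I would then check that every chord of $C$ other than the edge $vw$ itself joins non-consecutive vertices at least one of which is an inner vertex of $P$, hence is an inner chord and is forbidden by condition~(i). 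Consequently, if $m\geq 2$ the cycle $C$ would be an induced cycle of length at least four; as interval graphs are chordal, this is impossible. Therefore $m=1$ and $P=vuw$ for a single inner vertex $u$ adjacent to both $v$ and $w$.

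Next I would show that this inner vertex $u$ is non-terminal, which is exactly what lets Step~3 capture it. Suppose $u$ represented a terminal. Since partners never share a vertex and $s_i,t_i$ already sit on $v,w$, the vertex $u$ would represent a terminal of some pair $j\neq i$ and hence be an end-vertex of some solution path $P_j$. But $P_j$ and $P=vuw$ are then distinct solution paths sharing the vertex $u$, while $u$ is an \emph{inner} vertex of $P$; this violates condition~(ii), which permits shared vertices only when they are ends of both paths. Hence $u$ is non-terminal, so $P=vuw$ is exactly an object that Step~3 adds, with $I_{vuw}=I_u$. Taking $P'=P$ gives a candidate $s_it_i$-path with $I_{P'}\subseteq I_P$, which proves the ``moreover'' statement.

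For safety I would argue that the deletion preserves solvability. Because $u$ is non-terminal, in any solution it can only occur as an inner vertex of some path; and since $u$ is adjacent to the terminals $v$ and $w$, a short case analysis using condition~(iii) shows that any path using $u$ internally must have both $v$ and $w$ as its end-vertices, so by the structural claim that path equals $vuw$ and is recorded as a candidate of the appropriate colour. Thus removing $u$ from $G$ loses no usable path, while the recorded candidate never creates a path that was not already available; combined with Observation~\ref{o-trivial} this yields equivalence of the original and the new instance. Finally I would bound the cost: after Step~1 every non-terminal is adjacent to at most two terminal vertices, so scanning each non-terminal's adjacency list, reading off its at most two terminal neighbours, and testing via the pair list whether they are partners of a multi-pair takes $O(n+m+k)$ time in total; and since the intervals of colour $i$ are exactly the deleted non-terminals adjacent to both $v$ and $w$, each colour class has size at most $n$, as the candidate-set condition requires.

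The main obstacle I expect is the structural claim that a multi-pair path with inner vertices has exactly one inner vertex: getting the chord bookkeeping right---in particular that $vw$ is a cycle edge rather than a disallowed chord, so that all remaining chords of $C$ are genuine inner chords---is where the argument could slip, and it is the one place where chordality of interval graphs and the adjacency guaranteed by Step~2 are both used.
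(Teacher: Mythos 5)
Your proof is correct and follows essentially the same route as the paper's: the adjacency of $v$ and $w$ guaranteed by Step~2 plus chordality of interval graphs forces any solution $s_it_i$-path with inner vertices to have exactly one inner vertex, which Step~3 then records and safely deletes, with a bucket/scan argument giving the $O(n+m+k)$ bound. Your explicit verification that the inner vertex must be non-terminal (via condition~(ii)) and that any path using $u$ internally must end at both $v$ and $w$ (via condition~(iii)) makes precise two points the paper only asserts; the only part left slightly informal is how the lookup ``do $v$ and $w$ jointly host a multi-pair'' is done in amortized constant time, which the paper sidesteps by iterating over multi-pairs and bucketing vertices by the endpoints of the intersection interval $v\cap w$.
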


\begin{proof}
We first prove that Step~3 is correct. Let $u$ be a non-terminal vertex 
adjacent to terminal vertices $v$ and $w$  representing terminals $s_i$ and $t_i$ from a multi-pair $(s_i,t_i)$.
By Lemma~\ref{l-1}, we find that $u$ is not adjacent to any other terminal vertices. Hence, $vuw$ may be considered as 
a candidate path for a solution. Moreover, because $u$ is adjacent to both $v$ and $w$, we deduce the following.
Firstly, every $s_it_i$-path in a solution has at most one inner vertex; otherwise its vertices would induce a cycle on at least four 
vertices in $G$, as $v,w$ are adjacent by Step~2. Hence, the set of intervals added to $V_H$ for each multi-pair $(s_i,t_i)$ contains all
possible solution paths for $(s_i,t_i)$, and as such corresponds to a candidate set for $(s_i,t_i)$.
Secondly, $u$ may not be used
in a solution path for a terminal pair $(s_j,t_j)$ with $j\neq i$. Hence, we can safely remove $u$ from $G$.
Because we only added intervals to $H$ that correspond to distinct vertices,  we added $O(n)$ vertices to $V_H$ in total.

We now show how to perform Step~3 in $O(n+m+k)$ time. 
Construct $2n$ buckets $B_1,\ldots,B_n$. We add every vertex $u\in V_G$ to buckets $B_{l_u},\ldots,B_{r_u}$. 
By the definition of our interval representation, the degree of $u$ in $G$ is equal to~$r_u-l_u-1$. 
Hence, $|B_1|+\ldots +|B_n|\leq \sum_{u\in V_G}(r_u-l_u+1)\leq\sum_{u\in V_G}(2\deg_G(u)+2)=4m+2n$, implying that filling the buckets takes $O(n+m)$ time in total.
For any terminal intervals $v$ and $w$ that represent terminals $s_i$ and $t_i$ of a multi-pair, determine the intersection interval $[l,r]$ of $v$ and $w$ (by Step~2, $v$ and $w$ are adjacent). Then remove every vertex $u$ of $G$ that is in $B_l\cup \cdots \cup B_r$, color $I_{vuw}$ with color~$i$, and add $I_{vuw}$ to $V_H$. 
This takes time $O(n+m+k)$ in total, and $O(n)$ intervals are added to $H$. 
\qed
\end{proof}

In the next two steps, which are inspired by our earlier work on \problemIDP~\cite{GPV12,GPV12b}, we get rid of all adjacent terminal vertices that represent the same terminal pair. This includes (but is not limited to) all multi-pairs.

\medskip
\noindent
{\bf Step 4.}
Find the set $Z$ of all terminal vertices $v$ such that $v$ only represents terminals  whose partners are in $N_G(v)$.
Delete the vertices of $Z$ and all non-terminal vertices of $N_G(Z)$ from $G$. Delete from $T$ the terminals of all terminal pairs $(s_i,t_i)$ with $s_i\in T_v$ or $t_i\in T_v$ for some $v\in Z$. Put all terminal paths corresponding to deleted terminal pairs in ${\cal P}$.

\begin{lemma}\label{l-4}
Step 4 is safe.
\end{lemma}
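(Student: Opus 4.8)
The step returns no \texttt{No}-answer and adds no interval to $H$, so conditions~(i) and~(iii) of safeness are vacuous here; the substance is the equivalence~(ii) --- that after deleting $Z$ together with the non-terminal vertices of $N_G(Z)$ and removing the pairs met by $Z$, the resulting instance is a yes-instance if and only if the original is --- together with the running-time bound. My plan is to prove both implications from Observation~\ref{o-trivial} and the three defining conditions of ``mutually induced'', using throughout that every $v\in Z$ has all its partners adjacent, so each deleted pair possesses a terminal path.

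For the forward implication I would start, by Observation~\ref{o-trivial}, from a solution that contains all terminal paths, and then restrict it to the non-deleted pairs. The key sub-claim is that no inner vertex of a path $P_{j}$ of a non-deleted pair is adjacent to any $v\in Z$: since $v$ represents only terminals of deleted pairs, $v$ is never an end-vertex of $P_{j}$, so such an adjacency would violate condition~(iii). Hence every non-terminal vertex of $N_G(Z)$ is used, if at all, only as an inner vertex of a path of a deleted pair, while every $v\in Z$ is used only as an end of such paths. Deleting $Z$ and the non-terminal vertices of $N_G(Z)$ therefore leaves all paths of the non-deleted pairs intact and mutually induced, which is a solution of the reduced instance. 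For a deleted multi-pair $(s_i,t_i)$ the terminal edge accounts for one of its $r_i$ paths (and is placed in ${\cal P}$), whereas its remaining $r_i-1$ paths are single-inner-vertex paths whose inner vertices were already recorded as color-$i$ candidates in $V_H$ during Step~3; these survive the deletion and are selected in Stage~III.

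For the backward implication I would take a solution ${\cal S}'$ of the reduced instance and re-insert, for each deleted pair, its terminal edge (together with the color-$i$ candidate paths for a deleted multi-pair). Distinct terminal edges are automatically mutually induced, having no inner vertices, so only their interaction with ${\cal S}'$ must be checked, and this is the delicate part. Because $Z$ and the non-terminal vertices of $N_G(Z)$ are absent from the reduced graph, ${\cal S}'$ avoids them; thus a re-inserted edge $vw$ with $v\in Z$ shares no vertex of ${\cal S}'$ at $v$, and, since every non-terminal neighbour of $v$ has been deleted while terminals are never inner vertices, no inner vertex of ${\cal S}'$ is adjacent to $v$. It remains to control the partner end $w$: if $w$ represents only terminals of deleted pairs then $w\in Z$ as well and is likewise deleted, so ${\cal S}'$ avoids it; otherwise $w$ is still a terminal vertex of the reduced graph, and condition~(iii) applied inside ${\cal S}'$ already forbids any inner vertex of ${\cal S}'$ from being adjacent to $w$ unless $w$ is its own end. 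In either case adding $vw$ respects conditions~(i)--(iii), and the combined family solves the original instance.

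Finally, for the running time I would compute $Z$ by scanning the $O(k)$ terminals and testing for each, in $O(1)$ time via the representation, whether its partner is adjacent; then delete $Z$ and traverse the adjacency lists of $Z$ to remove the non-terminal vertices of $N_G(Z)$ in $O(n+m)$ time; and finally update $T$ and ${\cal P}$ in $O(k)$ time, for a total of $O(n+m+k)$. I expect the main obstacle to be the backward direction's conflict analysis at the partner vertices $w$ --- in particular the dichotomy ``$w\in Z$ versus $w$ remains a terminal'' --- together with the bookkeeping that keeps the color-$i$ candidates and the reduced requirement $r_i-1$ of a deleted multi-pair available for Stage~III.
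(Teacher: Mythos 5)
Your proposal is correct and follows essentially the same route as the paper's proof: Observation~\ref{o-trivial} justifies committing to the terminal paths, condition~(iii) of ``mutually induced'' shows that no inner vertex of a path for a surviving pair can be adjacent to a vertex of $Z$ (so the non-terminal vertices of $N_G(Z)$ are unusable by surviving pairs), multi-pairs are covered by the candidates recorded in Step~3, and the time bound comes from scanning the terminal list and the adjacency lists of $Z$. You are somewhat more explicit than the paper on the backward (re-insertion) direction --- in particular your asserted dichotomy that a terminal vertex all of whose pairs are deleted must itself lie in $Z$ is true and worth the one-line justification --- but this is a difference of thoroughness, not of method.
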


\begin{proof}
We first show that Step~4 is correct. Let $\{s_{i_1},\ldots,s_{i_p},t_{j_1},\ldots,t_{j_q}\}$ be the union of all terminals represented by
vertices in  $Z$. 
By Observation~\ref{o-trivial}, we may assume that each terminal path for $(s_{i_a},t_{i_a})$ for $a=1,\ldots,p$ and each terminal path for 
$(s_{j_b},t_{j_b})$ for $b = 1,\ldots,q$ is 
in a solution, if our instance is a yes-instance. Hence, we can safely put these terminal paths in ${\cal P}$. Moreover, as we already identified a candidate set for all multi-pairs  in Step~3, we may safely remove each of the two terminals of every pair $(s_{i_a},t_{i_a})$ for $a=1,\ldots,p$ and every pair $(s_{j_b},t_{j_b})$ for $b=1,\ldots,q$
from $T$.

Let $u$ be a non-terminal vertex in $N_G(Z)$. 
Then $u$ is not adjacent to two terminal vertices representing two terminals from a multi-pair, as otherwise we would
have removed $u$ in Step~3 already. Moreover, $u$ is not used as an inner vertex of a solution path for a simple terminal pair $(s_i,t_i)$ either, for the following two reasons.
Firstly, if $s_i$ or $t_i$ is represented by a vertex in~$Z$, we would use the corresponding terminal path for a solution due to
Observation~\ref{o-trivial}. Secondly, if both $s_i$ and $t_i$ are not represented by a vertex in~$Z$,  we could still not use
$u$ as an inner vertex for an $s_it_i$-path, as $u$ is adjacent to some terminal vertex in~$Z$.

We now show how to perform Step~4 in $O(n+m+k)$ time.
We ``mark'' each terminal vertex. Then we go through the list of terminal pairs, and if a pair $(s_{i},t_{i})$ is not represented by adjacent terminal vertices, then we ``unmark''  these terminal vertices. 
The set $Z$ is the set of  all ``marked''  terminal vertices that are left in the end. 
By using the interval representation, obtaining $Z$ takes $O(k)$ time.
By using the adjacency lists of the vertices of $Z$, we find all non-terminal vertices of $N_G(Z)$. Each time we find such a non-terminal vertex, we delete it from $G$. Afterward, we delete all vertices of $Z$.
This takes $O(n+m)$ time.
Finally, we go through the list of terminal pairs, and if a terminal $s_i$ or $t_i$ is in $Z$, we delete both $s_i$ and $t_i$ from $T$ and add its terminal path to $\mc{P}$.
This takes $O(k)$ time. We conclude that
the total running time of performing Step~4 is~$O(n+m+k)$.\qed
\end{proof}

After Step~4, each terminal vertex represents at least one terminal whose partner is at distance at least~2. There may still be terminal pairs whose terminals are represented by adjacent vertices. We deal with such pairs in the next step.

\medskip
\noindent
{\bf Step 5.}
Delete all terminals $s_i$ and $t_i$ represented by adjacent terminal vertices from the terminal list, and delete all common non-terminal neighbors of the terminal vertices that represent $s_i$ and $t_i$.
Put all terminal paths corresponding to deleted terminals in ${\cal P}$.

\begin{lemma}\label{l-5}
Step 5 is safe.
\end{lemma}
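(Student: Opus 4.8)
The plan is to prove correctness and the running time separately, following the same two-part template as the earlier steps. For correctness in the sense of condition~(ii) of safeness, I would take a pair $(s_i,t_i)$ whose terminals are represented by adjacent vertices $u$ and $v$, say, and argue that in any solution the terminal path $uv$ is forced. After Step~4, such a pair must be a simple pair (all multi-pairs were already handled in Step~3, and any terminal vertex representing only terminals with adjacent partners was removed as part of $Z$ in Step~4); moreover, after Step~4 each surviving terminal vertex also represents some terminal whose partner is at distance at least~$2$. The crucial point is Observation~\ref{o-trivial}: since $u,v$ are adjacent and $(s_i,t_i)$ is simple (so $r_i=1$), the terminal path $uv$ is the unique $s_it_i$-path in a solution that contains all terminal paths, hence it can be safely added to $\mc P$ and the terminals removed from the list. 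This justifies both deleting the terminals and putting $uv$ into $\mc P$.

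Next I would argue why the common non-terminal neighbours of $u$ and $v$ can be deleted. Let $x$ be a non-terminal vertex adjacent to both $u$ and $v$. By Step~1, $x$ is adjacent to at most two terminal vertices, so the only terminals it can reach are $s_i$ and $t_i$ (placed on $u,v$). Since the terminal path $uv$ is now in the solution and no other $s_it_i$-path is needed, $x$ cannot serve as an inner vertex for colour~$i$. It also cannot be an inner vertex of any solution path of another colour~$j\neq i$, because such a path would have $x$ adjacent to the terminal vertices $u,v$ of colour~$i$, violating condition~(iii) of being mutually induced (here I use that $u,v$ are not end-vertices of the colour-$j$ path, as they represent only colour~$i$ after the earlier steps). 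Hence $x$ is unusable in any solution and its deletion preserves yes/no status; this gives condition~(ii) of safeness, while no \texttt{No}-answer is produced (condition~(i) is vacuous) and no intervals are added to $H$ (condition~(iii) is vacuous).

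For the running time I would mimic the implementation of Steps~1 and~4. Going once through the list of $k$ terminal pairs, I can test adjacency of the two representing vertices in $O(1)$ time each using the representation, identifying in $O(k)$ time all pairs to be removed and recording their terminal paths into $\mc P$. To delete the common non-terminal neighbours efficiently I would reuse the bucket structure from Step~3: for each such adjacent pair $(u,v)$, compute the intersection interval $[l,r]$ of $u$ and $v$ and delete from $G$ every non-terminal vertex lying in $B_l\cup\cdots\cup B_r$, exactly as in Step~3. Since each vertex is touched a number of times bounded by its interval length, and $\sum_u (r_u-l_u+1)=O(n+m)$ as computed in Lemma~\ref{l-3}, the total work is $O(n+m+k)$.

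The main obstacle I anticipate is the correctness argument for deleting the common non-terminal neighbours, specifically ruling out that such an $x$ is used by another colour~$j$. One must be careful that after Steps~1--4 the vertices $u,v$ genuinely represent only the single colour~$i$ in the relevant sense, so that $x$ being adjacent to $u$ (or $v$) really does violate the mutually-induced condition rather than being excused by the end-vertex exception in condition~(iii). Establishing this cleanly, using the guarantee from Step~4 that each terminal vertex still represents a terminal with a distant partner, is where the delicacy lies; the running-time bound, by contrast, is a routine reuse of the Step~3 bucketing.
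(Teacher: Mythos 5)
Your proposal has the right overall shape (force the terminal path via Observation~\ref{o-trivial}, argue the common neighbours are unusable, reuse the Step~3 bucketing for the running time), but two of its load-bearing claims are false, and they are exactly the delicate points you flag at the end. First, it is not true that every pair handled in Step~5 is a simple pair. Step~3 records candidate paths for multi-pairs but does not delete the pairs, and Step~4 only removes a pair when one of its terminal vertices lies in $Z$; a multi-pair whose two (adjacent) terminal vertices each also represent a terminal with a far partner survives to Step~5. The paper's proof treats this case explicitly: for such a multi-pair the terminals may still be deleted because the remaining $r_i-1$ candidate paths were already placed in $H$ in Step~3 (and indeed the paper only asserts that multi-pairs are gone \emph{after} Step~5). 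Second, your claim that $u,v$ ``represent only colour~$i$ after the earlier steps'' is exactly backwards: if $u$ represented only terminals with adjacent partners it would have been put in $Z$ and deleted in Step~4, so each of $u$ and $v$ must represent at least one \emph{other} terminal whose partner is at distance at least two. Consequently $u$ (or $v$) can perfectly well be an end-vertex of a colour-$j$ path, and your stated reason for excluding a common neighbour $x$ collapses. The conclusion can be rescued, but by a different argument: $x$ is adjacent to \emph{both} $u$ and $v$, so condition~(iii) against the terminal path $uv$ would force both $u$ and $v$ to be end-vertices of the colour-$j$ path, which is impossible for a path with $x$ as an inner vertex; this is precisely why only the \emph{common} neighbours are deleted.

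Relatedly, your proof omits the converse direction that the paper's second paragraph is really about: after the terminals of pair $i$ are struck from the list, later stages may still route solution paths through neighbours of $u$ or $v$ that are adjacent to only one of them, and one must verify that such paths are compatible with the terminal path $uv$ sitting in $\mathcal{P}$. The paper's argument is that any inner vertex of a later solution path that is adjacent to $v$ belongs to a path for a pair $(s_j,t_j)$ with $s_j$ or $t_j$ represented by $v$ (using that $v$ is still a terminal vertex, by the Step~4 guarantee), so $v$ is an end-vertex of both paths and condition~(iii) is satisfied --- unless the vertex is adjacent to both $u$ and $v$, which the deletion rules out. Without this piece, you have only shown that a solution of the original instance restricts to one of the reduced instance, not that a solution of the reduced instance extends to one of the original. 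The running-time analysis, by contrast, is fine and if anything more detailed than the paper's.
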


\begin{proof}
By using the interval representation, Step~5 can be done in $O(n+m+k)$ time. Hence, it remains to show that Step~5 is correct.

First, we may assume without loss of generality that a solution contains all terminal paths by Observation~\ref{o-trivial}.
Hence, we may safely put these terminal paths in ${\cal P}$, and delete terminals that are represented by adjacent terminal vertices if $(s_i,t_i)$ is not a multi-pair; if $(s_i,t_i)$ is a multi-pair, then all candidate paths have already been identified in Step~3, and thus $s_i$ and $t_i$ may be deleted as well. 

Second, if a solution path contains an inner vertex $u$ adjacent to a terminal vertex $v$ representing a terminal that we remove in Step~5, then the reason is that $u$ belongs to a solution path for a terminal pair $(s_j,t_j)$ where $s_j$ or $t_j$ is 
represented by $v$ as well (note that $v$ represents at least one terminal whose partner is not represented by a neighbor of $v$, as otherwise we would have removed $v$ in Step~4). Hence, $u$ is allowed to be adjacent to $v$ by definition, except if $u$ is adjacent to both the terminal vertex that represents $s_i$ and the terminal vertex that represents $t_i$. Since these common neighbors are removed in Step~5, however, this is not possible.
\qed
\end{proof}

Call a terminal pair {\it long} if its two terminals are represented by vertices of distance at least~2. 
After Step~5, all terminal pairs are long. Therefore, by Step~2, there are no multi-pairs anymore.
Assume that there are $k'\leq k$ terminal pairs left; note that $k'=0$ is possible.

\medskip
\noindent
{\bf Step 6.}
Check if there exists a terminal vertex that represents three or more terminals. If so, then return a \texttt{No}-answer.

\begin{lemma}\label{l-6}
Step~6 is safe.
\end{lemma}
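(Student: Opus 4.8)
The plan is to show that Step~6 is safe, which by the definition given earlier requires establishing two things: correctness (a \texttt{No}-answer is returned only for no-instances) and running time $O(n+m+k)$. I would begin with correctness, which is the heart of the argument. Suppose a terminal vertex $v$ represents three or more terminals, say $a_1,a_2,a_3 \in T_v$. After Step~5 we know every surviving terminal pair is \emph{long}, so each of these terminals has a partner represented by a vertex at distance at least~$2$ from $v$. The key observation is that each terminal $a_\ell$ represented by $v$ must be the end-vertex of a solution path $P_\ell$, and since the pair is long, each $P_\ell$ has at least one inner vertex adjacent to $v$. I would argue that these inner vertices (one for each of the three terminals) are pairwise distinct and pairwise non-adjacent by the mutually-induced conditions~(ii) and~(iii): an inner vertex of $P_{\ell}$ cannot coincide with or be adjacent to an inner vertex of $P_{\ell'}$ for $\ell \neq \ell'$, since $v$ (their shared neighbor/endpoint) is an end-vertex but the inner vertices themselves are not end-vertices of both paths.

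The crux is then the interval-graph structure. The three inner vertices $u_1,u_2,u_3$ are all neighbors of $v$, hence their intervals all intersect the interval of $v$; say they all contain a common point of $v$ or at least overlap $v$'s interval. I would invoke the Helly property of intervals on the line: since $u_1,u_2,u_3$ each intersect $v$, and they are pairwise non-adjacent (so pairwise non-intersecting), this is impossible. Concretely, among three intervals that each meet a fixed interval $v$ but are pairwise disjoint, at least two must lie on the same side of $v$ and be forced to overlap, or one must be nested in a way that contradicts disjointness — a short case analysis on the relative positions of the left/right endpoints $l_{u_\ell}, r_{u_\ell}$ with respect to $l_v, r_v$ yields a contradiction. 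Thus no solution can exist, so the \texttt{No}-answer is justified, proving correctness.

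For the running time, the check is straightforward: maintain a counter $|T_v|$ for each terminal vertex while reading the terminal list, or simply scan the list $(s_1,t_1),\ldots,(s_k,t_k)$ once, tallying for each vertex how many terminals it represents, and report \texttt{No} if any tally reaches~$3$. Using the representation and the set $T_v$ already maintained by the algorithm, this takes $O(n+k)$ time, well within the $O(n+m+k)$ budget.

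I expect the main obstacle to be the geometric argument that three pairwise non-adjacent inner vertices cannot all be adjacent to $v$. The subtlety is ensuring the three inner vertices are genuinely distinct and genuinely non-adjacent; this relies on correctly applying conditions~(ii) and~(iii) when the shared vertex $v$ is an endpoint of all three paths. Once the three distinct pairwise-disjoint intervals each meeting $v$ are in hand, the contradiction is a clean application of the linear order of interval endpoints, but setting it up carefully — and handling the possibility that some $a_\ell$'s partner is itself represented by a vertex with multiple terminals — is where I would spend the most care.
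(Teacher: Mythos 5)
Your setup matches the paper's: after Step~5 all pairs are long, so each of the three terminals represented by $v$ forces a solution path with an inner vertex adjacent to $v$, and these three inner vertices are pairwise non-adjacent by the mutually-induced conditions. The running-time argument is also fine. However, the final geometric step — the claimed contradiction — is wrong. Three pairwise disjoint intervals \emph{can} all intersect a single interval $v$: take $v=[0,10]$ and $u_1=[1,2]$, $u_2=[4,5]$, $u_3=[7,8]$. The Helly property gives you nothing here (it concerns pairwise \emph{intersecting} intervals having a common point), and no case analysis on endpoints will produce a contradiction at this stage, because there is none.

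The missing idea is that you must follow the \emph{middle} path beyond its first inner vertex. Order the three inner vertices so that $u_2$ lies between $u_1$ and $u_3$ on the line. The rest of the solution path through $u_2$, all the way to the vertex $v_2$ representing the partner terminal, cannot touch or be adjacent to $u_1$ or $u_3$ (they are inner vertices of other solution paths), so every vertex of that path except $v$ is trapped between $u_1$ and $u_3$. Since $v$ intersects both $u_1$ and $u_3$, the interval of $v$ contains the whole gap between them, hence contains (and thus intersects) $v_2$. That makes $v$ adjacent to $v_2$, contradicting the fact that the middle pair is long. Without this step your proof does not go through; with it, it becomes essentially the paper's proof.
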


\begin{proof}
We first prove that Step~6 is correct.
For contradiction, assume that a terminal vertex $u$ represents at least three terminals $s_h,s_i,s_j$.
Due to Step~5, these terminals belong to long pairs. 
Let $v_1,v_2,v_3$ denote the terminal vertices that represent $t_h$, $t_i$, $t_j$, respectively. 
Because $u$ is not adjacent to any of $v_1,v_2,v_3$, every solution has $s_ht_h$, $s_it_i$, and $s_jt_j$-paths that
each contain at least one inner vertex $x_1,x_2,x_3$, respectively. Assume without loss of generality that $x_1$, $x_2$, $x_3$
are  adjacent to $u$.  The intervals $x_1,x_2,x_3$ do not intersect each other but they do intersect $u$. 
Assume without loss of generality that $x_2$ lies between $x_1$ and $x_3$. 
Then all the vertices of the $s_it_i$-path except $u$ lie between $x_1$ and $x_3$. Therefore, $u$ and $v_2$ are adjacent.
This contradicts with the fact that the pair $(s_j,t_j)$ is long. Hence, our instance is a no-instance if this situation occurs.

Step~6 can be performed in $O(n+k)$ time by going through the list of terminals and counting how often each terminal vertex occurs.
\qed
\end{proof} 

By Step~6, a terminal vertex may represent at most two terminals (which must belong to different terminal pairs).
We now observe that terminals should be ordered, and we let our algorithm find this ordering.

\medskip
\noindent
{\bf Step 7.} Check if there exist three terminal vertices $u$, $v$, $w$ such that $u$ and $w$ represent terminals from 
the same pair such that $l_u\leq l_v<l_w$. If so, then return a \texttt{No}-answer. Otherwise, order and rename the terminals
such that 
$r_{u_i}< l_{v_i}$  and $l_{v_{i}}\leq l_{u_{i+1}}$ for $i=1,\ldots,k'-1$, where $u_i,v_i$ are the vertices representing $s_i,t_i$, respectively.

\begin{lemma}\label{l-7}
Step 7 is safe.
\end{lemma}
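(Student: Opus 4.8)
Step 7 asks us to show two things: correctness of the No-answer, and correctness plus linear-time feasibility of the renaming. Let me sketch the proof.

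The plan is to prove correctness of the No-condition first, then establish that in the remaining case a consistent linear order of the terminal pairs exists and can be computed in time $O(n+m+k)$.

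For correctness of the No-answer, suppose there exist terminal vertices $u,v,w$ with $u$ and $w$ representing terminals of the same pair $(s_i,t_i)$ and $l_u \leq l_v < l_w$. The key structural fact to exploit is Lemma~\ref{l-cover}(iii): in any solution, no interval of one color lies strictly between two intervals of another color. Since $(s_i,t_i)$ is long (by Step~5), any solution $s_it_i$-path has at least one inner vertex, and the union interval $I_P$ of its inner vertices must connect $u$ and $w$, hence must span the region between $r_u$ and $l_w$. Because $l_u \leq l_v < l_w$, the terminal interval $v$ sits inside this spanned region; but $v$ represents a terminal of some other pair $(s_j,t_j)$, and by Lemma~\ref{l-cover}(i) every color-$j$ interval must intersect $v$. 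Then any color-$j$ path would be forced to lie between the two endpoints of the color-$i$ path, or its endpoint $v$ would be enclosed, contradicting the non-nesting structure of Lemma~\ref{l-cover}(iii). This yields a no-instance.

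Given that this obstruction does not occur, the terminals admit a consistent ordering. First I would argue that the relation is well-defined: for each pair, $u_i$ (representing $s_i$) and $v_i$ (representing $t_i$) can be oriented so that $r_{u_i} < l_{v_i}$, which holds since the pair is long and hence $u_i,v_i$ are non-adjacent with disjoint intervals. The absence of the forbidden configuration means no terminal vertex of another pair has its left endpoint strictly between $l_{u_i}$ and $l_{v_i}$ in the crossing pattern that would prevent a total order; consequently, sorting the pairs by the left endpoint $l_{u_i}$ of their "left" terminal yields the desired indexing with $l_{v_i} \leq l_{u_{i+1}}$. One must verify that consecutive pairs do not interleave, which again follows from the excluded configuration.

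The main obstacle will be the bookkeeping in the correctness argument: carefully handling the case where $v$ represents \emph{two} terminals (Step~6 permits up to two), and ensuring the contradiction via Lemma~\ref{l-cover}(iii) is airtight regardless of which of $v$'s represented terminals is invoked. For the running time, since the representation gives endpoints as sorted integers $1,\ldots,2n$, detecting the forbidden triple and producing the sorted order both reduce to a single scan over the $O(k)$ terminal intervals together with a bucket or radix sort on endpoints, giving the claimed $O(n+m+k)$ bound.
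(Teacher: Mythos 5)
Your argument follows the paper's proof of Lemma~\ref{l-7}: the \texttt{No}-answer is justified by observing that the covered interval of any solution $s_it_i$-path must span the gap between $u$ and $w$ while any path for a pair represented by $v$ must touch $v$, which Lemma~\ref{l-cover} forbids; and the ordering together with the $O(n+k)$ bound comes from a bucket sort on the integer endpoints $1,\ldots,2n$ followed by a single scan checking that partners sit in consecutive non-empty buckets. One correction: the contradiction should be drawn from parts (i) and (ii) of Lemma~\ref{l-cover} (the paper uses (ii), namely that $I_{P_1}\cap I_{P_2}=\emptyset$ is impossible when $l_u\leq l_v<l_w$), whereas part (iii), which you invoke, is vacuous at this point in the algorithm because Steps 2--5 have eliminated all multi-pairs, so no color ever has two covered intervals for another color's interval to lie between.
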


\begin{proof}
We first prove that Step~7 is correct.
Suppose that there exist three terminal vertices $u$, $v$, $w$ such that $u$ and $w$ represent terminals from 
the same pair and $l_u\leq l_v<l_w$. Assume that $u$, $v$, $w$ represent $s_i$, $s_j$, $t_i$, respectively, and
let $x$ represent $t_j$.
Let $P_1$ and $P_2$ be the $s_it_i$-path and $s_jt_j$-path, respectively, in a solution. 
Because $(s_i,t_i)$ and $(s_j,t_j)$ are long, both $P_1$ and $P_2$ contain at least one inner vertex.
By Lemma~\ref{l-cover}, 
$I_{P_1}\cap I_{P_2}=\emptyset$. However, this is not possible as $l_u\leq l_v<l_w$. Hence, our instance is a no-instance.

We now show how to perform Step~7 in $O(n+k)$ time. 
Recall that each end-point of an interval is an integer between $1$ and $2n$. Construct $2n$ buckets $B_1,\ldots,B_{2n}$. Then  go through the list of terminal pairs $T$ and put a terminal in bucket $B_{l_u}$ if $u$ is the vertex of $G$ that represents the terminal.
Go through the non-empty buckets among $B_{1},\ldots,B_{2n}$ in increasing order and verify 
whether the partner of a terminal of a terminal pair not seen before is in the next non-empty bucket.
Stop and return a \texttt{No}-answer if this does not hold. Otherwise, as each bucket contains at most two terminals due to Step~6, this gives the desired ordering of the terminal pairs in $O(n+k)$ time.\qed
\end{proof}

\noindent
{\bf Step 8.}
For $i\in\{1,\ldots,k'-1\}$, if $t_i$ and $s_{i+1}$ are represented by distinct vertices $u$ and $v$,  delete all non-terminal vertices adjacent to both $u$ and $v$.

\begin{lemma}\label{l-8}
Step~8 is safe.
\end{lemma}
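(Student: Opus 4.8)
The plan is to separate the (easy) running-time claim from the correctness claim, and for correctness to exploit that Step~8 only ever removes non-terminal vertices. First I would dispose of the running time. Using the Step~7 notation, where $u_j,v_j$ represent $s_j,t_j$ with $r_{u_j}<l_{v_j}$ and $l_{v_j}\le l_{u_{j+1}}$, I get the chain $l_{u_1}<l_{v_1}\le l_{u_2}<l_{v_2}\le\cdots$; in particular the vertices $v_1,\dots,v_{k'-1}$ are pairwise distinct. For each $i$ with $t_i$ on $v_i$ and $s_{i+1}$ on $u_{i+1}$, I would scan the adjacency list of $v_i$ and, using the representation, test in $O(1)$ whether each listed vertex is non-terminal and also adjacent to $u_{i+1}$, deleting it if so. Since the $v_i$ are distinct, $\sum_i\deg_G(v_i)\le 2m$, so the whole step costs $O(n+m+k)$.

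For correctness I would first observe that vertex deletions cannot turn a no-instance into a yes-instance, so no \texttt{No}-answer is produced and it suffices to show the deletions preserve yes-instances. As every deleted vertex $x$ is non-terminal, it is never an end-vertex of a solution path; hence it is enough to prove that $x$ cannot be an \emph{inner} vertex of any solution path, for then a solution of the original instance avoids all deleted vertices and survives deletion (the converse being trivial, since deleting vertices adds no edges).

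So fix $i$ and a deleted vertex $x$, which is non-terminal and adjacent to both $v_i$ (representing $t_i$) and $u_{i+1}$ (representing $s_{i+1}$). Because Step~8 acts only when these are distinct vertices, and all end-points are distinct integers with $l_{v_i}\le l_{u_{i+1}}$, I get the strict inequality $l_{v_i}<l_{u_{i+1}}$; together with $r_{u_j}<l_{v_j}$ this yields the strict chain $l_{u_1}<l_{u_2}<\cdots<l_{u_{k'}}$. Suppose for contradiction that $x$ is an inner vertex of a solution path $P_b$ for a pair $(s_b,t_b)$, and apply property~(iii) of ``mutually induced'' to the two adjacencies $xv_i$ and $xu_{i+1}$. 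If $b=i$, then $u_{i+1}$, a vertex of the distinct path $P_{i+1}$, is adjacent to the inner vertex $x$ of $P_i$, so $u_{i+1}$ must be an end-vertex of $P_i$, i.e.\ $u_{i+1}\in\{u_i,v_i\}$; but $u_{i+1}\ne v_i$ by hypothesis and $l_{u_{i+1}}\ne l_{u_i}$ by strict monotonicity, a contradiction. The case $b=i+1$ is symmetric, forcing $v_i\in\{u_{i+1},v_{i+1}\}$, impossible since $l_{v_i}<l_{u_{i+1}}<l_{v_{i+1}}$.

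The hard part is the remaining case $b\notin\{i,i+1\}$, where property~(iii) is actually \emph{satisfied} rather than violated: it forces both $v_i$ and $u_{i+1}$ to be end-vertices of $P_b$, and as they are distinct with $l_{v_i}<l_{u_{i+1}}$ this means $u_b=v_i$ and $v_b=u_{i+1}$, so $v_i$ also represents $s_b$. Here I would not appeal to (iii) any further but instead invoke the Step~7 ordering: the left end-point of the $s_b$-vertex equals $l_{v_i}$, which satisfies $l_{u_i}<l_{v_i}<l_{u_{i+1}}$ and hence lies strictly between the two \emph{consecutive} values $l_{u_i},l_{u_{i+1}}$ of the strictly increasing sequence $l_{u_1}<\cdots<l_{u_{k'}}$; but every pair's $s$-vertex contributes a term of this sequence, leaving no index available for pair $b$, a contradiction. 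This closes all cases, so $x$ can occur in no solution and Step~8 is safe.
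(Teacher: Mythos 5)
Your proof is correct and takes essentially the same approach as the paper, whose own argument is a one-line assertion that a non-terminal vertex adjacent to the representatives of $t_i$ and $s_{i+1}$ can never be an inner vertex of a solution path ``by the definition of the problem''. Your case analysis --- in particular the case $b\notin\{i,i+1\}$, where property~(iii) of mutual inducedness alone does not yield a contradiction and the Step~7 ordering must be invoked to exclude a pair with end-vertices $v_i$ and $u_{i+1}$ --- supplies a detail that the paper leaves implicit.
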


\begin{proof}
Any non-terminal vertex deleted in Step~8 can never be used as an inner vertex of a solution path by the definition of the \problemRIDP{} problem.
Step~8 runs in $O(n+m+k)$ time by the same arguments as in the proof of Lemma~\ref{l-3}.
\qed
\end{proof} 

\subsection{Stage II: Construct $H$}
We now construct the auxiliary $H$. Note that some intervals were already added to $H$ as part of our preprocessing stage (see Step~3).

\medskip
\noindent
{\bf Step 9.} For each $i\in\{1,\ldots,k'\}$, perform steps 9a--9d (where $u$ and $v$ are terminal vertices that represent $s_i$ and $t_i$, respectively).

\medskip
\noindent
{\bf 9a.} For every common neighbor $w$ of $u$ and $v$, add the interval $I_{uwv}$ to $H$ with color $i$, and delete $w$ from $G$.

\medskip
\noindent
{\bf 9b.} For each neighbor $x$ of $u$ not adjacent to $v$, determine whether there exists a neighbor $y$ of $v$ adjacent to $x$.
If so, then choose $y$ such that the right end-point of $y$ is leftmost amongst all such neighbours of $v$.
Add the interval $I_{uxyv}$ to $H$ with color $i$.

\medskip
\noindent
{\bf 9c.}
Determine the connected components $C_{1},\ldots,C_{p}$ of $G - (N[u] \cup N[v])$ whose vertices lie between $r_{u}$ and $l_{v}$.
For each $C_{j}$, determine the vertex $l(C_{j})$ with the leftmost left end-point and the vertex $r(C_{j})$ with the rightmost right end-point. Then among the neighbors that $l(C_{j})$ and $u$ have in common,
let $s_{i}(C_{j})$ be the one with the rightmost left end-point (if it exists). Similarly, let $t_{i}(C_{j})$ be the neighbor that $r(C_{j})$ and $v$ have in common and that has the leftmost right end-point (if it exists). 
Add the interval between the left end-point of $s_{i}(C_{j})$ and the right end-point of $t_{i}(C_{j})$ to $H$ with color $i$, if it has not been
added already in Step~9b (which might be the case if $s_i(C_j)$ and $t_i(C_j)$ intersect).

\begin{lemma}\label{l-9}
Step~9 is safe. Moreover, for $i=1,\ldots,k'$, if $P$ is a solution $s_it_i$-path, then there is a candidate $s_it_i$-path $P'$ with $I_{P'} \subseteq I_{P}$.
\end{lemma}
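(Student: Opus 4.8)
The plan is to establish Lemma~\ref{l-9} in two parts: correctness (safety) and the covering guarantee. The safety requirement has three components from the definition, and I would treat them in turn. For the running-time bound of $O(n+m+k)$, I would reuse the bucketing technique from the proof of Lemma~\ref{l-3}: distribute vertices into buckets indexed by end-points so that each adjacency-scan is charged against the degree sum $\sum_u (r_u - l_u + 1) = O(n+m)$. The connected components in Step~9c that lie between $r_u$ and $l_v$ can be found by a single left-to-right sweep of $G - (N[u]\cup N[v])$ over the relevant interval region, and the leftmost/rightmost end-point vertices $l(C_j), r(C_j)$, together with the extremal common neighbors $s_i(C_j), t_i(C_j)$, are each computable in time proportional to the vertices scanned; summed over all $i$ and all components this stays linear because the regions for distinct long pairs are essentially disjoint by the ordering established in Step~7. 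I would then verify that only $O(n)$ intervals are added to $H$, since each interval added in 9a corresponds to a distinct common neighbor $w$ (deleted immediately), each interval in 9b to a distinct neighbor $x$ of $u$, and each interval in 9c to a distinct component $C_j$.

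For the correctness conditions, the first two are quick: no \texttt{No}-answer is issued in Step~9, so (i) is vacuous, and (ii) follows because the intervals added represent genuine $s_it_i$-paths (paths $uwv$, $uxyv$, or a path threading a component) whose existence in $G$ is witnessed by construction, so deleting the common neighbors $w$ in 9a is justified exactly as in Step~9a's analysis—such a $w$ lies inside $u \cap v$ and cannot serve any other long pair by the ordering. The substantive work is condition (iii) together with the covering claim, and these I would prove simultaneously. Given an arbitrary solution $s_it_i$-path $P$ with at least one inner vertex, I would split into cases according to how many inner vertices $P$ has. If $P$ has exactly one inner vertex $w$, then $w \in N(u)\cap N(v)$, so $I_{uwv}$ was added in 9a and we may take $P' = uwv$ with $I_{P'} = I_P$. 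If $P$ has exactly two inner vertices $x', y'$ with $x' \in N(u)$, $y' \in N(v)$, then the ``leftmost right end-point'' choice of $y$ in 9b yields a candidate $uxyv$ with $I_{uxyv} \subseteq I_P$, using that $y$'s right end-point does not exceed $y'$'s and $x$ may be taken as $x'$.

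The main obstacle, and the heart of the argument, is the case where $P$ has three or more inner vertices. Here the inner vertices of $P$ (excluding the ones adjacent to $u$ and $v$) must lie in $G - (N[u]\cup N[v])$ and, being connected along $P$, fall inside a single component $C_j$ among those identified in 9c whose vertices lie between $r_u$ and $l_v$. The claim is that the candidate interval built in 9c—running from the left end-point of $s_i(C_j)$ to the right end-point of $t_i(C_j)$—is contained in $I_P$. The crux is showing that the extremal common neighbors exist and are positioned correctly: $s_i(C_j)$, the common neighbor of $u$ and $l(C_j)$ with rightmost left end-point, must have its left end-point at least as large as the left end-point of the first inner vertex of $P$ adjacent to $u$, and symmetrically for $t_i(C_j)$. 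I would argue this by using the interval structure: since $P$'s first inner vertex is itself a common neighbor of $u$ and a vertex of $C_j$ (namely $l(C_j)$ dominates the left reach of the component), the extremal choice can only push the candidate's left end-point rightward, hence inside $I_P$; the symmetric statement handles the right end. Finally, property (iii) of Lemma~\ref{l-cover}-style non-interleaving is inherited because all candidate intervals of color $i$ lie within the fixed window $[r_u, l_v]$-region dictated by the Step~7 ordering, so colors cannot interleave. I expect verifying the precise end-point inequalities for $s_i(C_j)$ and $t_i(C_j)$—and ruling out the degenerate subcase where these common neighbors coincide or fail to exist (already handled by the 9b fallback noted in the statement)—to be the most delicate part of the write-up.
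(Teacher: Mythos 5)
Your proposal is correct and follows essentially the same route as the paper's proof: the same three-way case analysis on the number of inner vertices of $P$ (one, two, or at least three), the same extremal-choice argument showing $I_{P'}\subseteq I_P$ via the leftmost-right-end-point neighbor in 9b and the vertices $s_i(C_j)$, $t_i(C_j)$ in 9c, and the same charging argument (components and common neighbors being unique to a single long pair after Steps 7--8) for the linear running time and the $O(n)$ bound on intervals added. The only cosmetic difference is that you propose bucketing where the paper uses adjacency-list scans and BFS, which yields the same bound.
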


\begin{proof}
We first prove that Step~9 is correct.
Let $i\in \{1,\ldots,k'\}$.
Let $u$ and $v$ be the (non-adjacent) vertices of $G$ representing $s_i$ and $t_i$, respectively.
Let $P$ be a solution path for $(s_i,t_i)$. 
 
Suppose that  $P$ has length~$2$.
Then $P$ has exactly one inner vertex $w$, which is adjacent to  both $u$ and $v$.
By Step 9a, $H$ contains the interval $I_P$.
 
Suppose that $P$ has length~$3$.
Then $P$ has exactly two inner vertices $x$ and $y'$ that are adjacent to $u$ and $v$, respectively.
Let $y$ be the neighbor of $v$ that is adjacent to $x$ and has the leftmost right end-point among all such vertices. Then $P' = uxyv$ is an $s_it_i$-path. Notice that $I_{P'} \subseteq I_{P}$ by the choice of $y$ and by the fact that $u$ and $v$ have no common neighbors after Step~9a. Therefore, in any solution that contains $P$, $P$ can be replaced $P'$. By Step~9b, $H$ contains~$I_{P'}$.

Finally, suppose that $P$ has length at least~$4$. 
Because $P$ is an induced path, there is a connected component $C_j$ of $G - (N[u] \cup N[v])$ whose vertices all lie between $r_{u}$ and $l_{v}$, such that all inner vertices of $P$ except two neighbors of $u$ and $v$ are in $C_j$. 
Let $x'$ and $y'$ be the neighbors of $u$ and $v$ on $P$, respectively.
Let $x = s_{i}(C_{j})$ and $y=t_{i}(C_{j})$. Then from $P$ we can construct an $s_it_i$-path $P'$ by replacing $x'$ and $y'$ with $x$ and $y$, respectively.
Notice that $I_{P'} \subseteq I_{P}$ by the choice of $y$ and by the fact that $u$ and $v$ have no common neighbors after Step~9a. Therefore, in any solution that contains $P$, $P$ can be replaced $P'$. By Step~9c, $H$ contains~$I_{P'}$.

Observe that the above arguments prove that for $i=1,\ldots,k'$, if $P$ is a solution $s_it_i$-path, then there is a candidate $s_it_i$-path $P'$ with $I_{P'} \subseteq I_{P}$.

We now show how to perform Step~9 in $O(n+m)$ time.
In Step~9a, we add all the intervals that correspond to common neighbors of $s_{i}$ and $t_{i}$ for $i=1,\ldots,k'$, and delete these common neighbors from $G$. 
Common neighbors of $s_i$ and $t_i$ are not common neighbors of terminals of any other pair by Step~8. Therefore, Step~9a takes $O(n+m)$ time in total, and $O(n)$ intervals are added to $H$.
In Step~9b, for $i=1,\ldots,k'$, we find for each neighbor $x$ of $s_{i}$ 
(recall that $x$ is not adjacent to $t_i$ after Step~9a),
 the neighbor $y$ of $t_{i}$ such that $x$ and $y$ are adjacent and the right end-point of $y$ is leftmost. 
By using the adjacency lists for the neighbors of $u$, Step~9b takes $O(n+m)$ time in total, and $O(n)$ intervals are added to $H$.
In Step~9c, we first find the connected components $C_{1},\ldots,C_{\ell}$. This can be done by performing a breadth-first search. 
Because the connected components that we consider (and their vertices) are unique to a terminal pair, Step~9c takes $O(n+m)$ time in total. 
Again, $O(n)$ intervals are added to $H$.
\qed
\end{proof}

\subsection{Stage III: Find Independent Set}
It remains to find a particular independent set in $H$.

\medskip
\noindent
{\bf Step 10.} Find an independent set in $H$ that, for $i=1,\ldots, k$, contains exactly $r_i-1$ or $r_i$ vertices colored~$i$ depending on whether $(s_i,t_i)$ is a multi-pair or not. If such a set exists, add the corresponding candidate paths to ${\cal P}$ and return ${\cal P}$. Otherwise, return a {\tt No}-answer.

\begin{lemma}\label{l-10}
Step~10 is safe.
\end{lemma}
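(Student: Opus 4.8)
The plan is to prove Lemma~\ref{l-10} by establishing two things: correctness and running time. For correctness, I need to show that a suitable independent set in $H$ exists if and only if the (preprocessed) instance is a yes-instance, and that the resulting $\mc{P}$ is a genuine solution. The colour counts in Step~10 are chosen to account for the bookkeeping done during preprocessing: for a simple pair $(s_i,t_i)$ we seek $r_i$ vertices of colour~$i$ (recall $r_i=1$ for simple pairs by the time we reach Stage~III, so this is one candidate path), whereas for a multi-pair we seek $r_i-1$, because one of the $r_i$ required paths is always the terminal path, which was already placed in $\mc{P}$ during Step~3 or Step~4 and is not represented in $H$.

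First I would argue the forward direction: if an independent set $S$ of $H$ with the prescribed colour counts exists, then the candidate paths corresponding to $S$, together with the terminal paths already in $\mc{P}$, form a solution. Here the key facts are Lemma~\ref{l-cover} and the adjacency structure of $H$. By construction each vertex of $H$ is an interval $I_{P'}$ covered by a candidate path $P'$, and two such intervals are adjacent in $H$ exactly when they intersect; an independent set therefore yields candidate paths whose covered intervals are pairwise disjoint. I would verify that pairwise-disjoint covered intervals, combined with Lemma~\ref{l-cover}(i) (each colour-$i$ interval meets only the terminals $s_i,t_i$), force the corresponding paths to be mutually induced in the sense of conditions~(i)--(iii): disjoint interiors give no shared or adjacent inner vertices, and the only permitted contacts are at shared terminal end-vertices. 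I would also check that the terminal paths added earlier do not conflict with these selected paths, which follows because the common neighbours that could cause conflicts were deleted in Steps~5, 8, and~9a.

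For the reverse direction, I would start from an arbitrary solution of the preprocessed instance and, using Observation~\ref{o-trivial}, assume it contains all terminal paths. For every non-terminal solution path $P$ for a pair $(s_i,t_i)$, Lemmas~\ref{l-3} and~\ref{l-9} guarantee a candidate $s_it_i$-path $P'$ with $I_{P'}\subseteq I_P$. Replacing each solution path by its candidate counterpart keeps the covered intervals pairwise disjoint (since $I_{P'}\subseteq I_P$ and the original $I_P$'s were disjoint by Lemma~\ref{l-cover}(ii)), so the candidate intervals form an independent set in $H$ of exactly the right colour counts. The main obstacle I anticipate is the \emph{running time}: Theorem~\ref{t-maxin} finds a largest independent set of a circular-arc graph in linear time, but Step~10 asks for an independent set with \emph{prescribed per-colour counts}, which is a more constrained problem. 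I expect the resolution to exploit Lemma~\ref{l-cover}(iii)—that colours do not interleave along the line—so that $H$ decomposes into independent colour blocks arranged in a linear order. This should let a weighted or counting variant of the linear-time interval/circular-arc independent-set computation of Theorem~\ref{t-maxin} run in $O(n+m+k)$ time, treating the per-colour requirements as a simple constraint within each block rather than as a general coloured-independent-set problem (which would otherwise be hard). Establishing that this decomposition is valid and that the counting version stays linear is the crux of the argument.
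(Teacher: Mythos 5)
Your correctness argument matches the paper's: the forward direction (independent set with the prescribed colour counts $\Rightarrow$ mutually induced paths, using the deletions from Steps~1, 4, 5, 8, and~9a to rule out stray terminal adjacencies) and the reverse direction (Observation~\ref{o-trivial} plus the replacement guarantees of Lemmas~\ref{l-3} and~\ref{l-9}, with disjointness preserved since $I_{P'}\subseteq I_P$) are exactly the paper's two halves. The running-time idea is also the right one in spirit: the paper runs a right-to-left greedy bucket scan that decrements per-colour counters, and its validity rests on the colours not interleaving along the line.

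However, there is a genuine gap in how you propose to justify that non-interleaving. You want to derive it from Lemma~\ref{l-cover}(iii), but that lemma only speaks about intervals covered by paths \emph{in a solution}; the greedy procedure must work on \emph{all} vertices of $H$, i.e., all candidate intervals, most of which will not appear in any particular solution (and the instance might even be a no-instance). The property actually needed is the paper's Claim~1: for any two colours $i\neq j$, every left end-point of a colour-$i$ interval of $H$ lies on the same side of every left end-point of a colour-$j$ interval. This must be proved directly from the \emph{construction} of $H$ --- a case analysis over whether each interval was added in Step~3 or Step~9, using the facts that Step~3 intervals are single non-terminal vertices wedged inside the intersection of two adjacent terminal intervals, that Step~9 intervals meet exactly their own two (ordered, by Step~7) terminals, and that the deletions in Steps~1, 4, and~8 prevent any candidate interval from straddling a foreign terminal pair. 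You correctly flag this as the crux but do not supply the argument, and the source you point to cannot supply it; without Claim~1 the greedy scan with per-colour counters has no correctness guarantee.
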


\begin{proof}
We first prove that Step~10 is correct.
We do this by proving that our instance is a yes-instance if and only if $H$ has an independent set as described in Step~10.
First, suppose that $H$ has such an independent set~${\cal I}$. For each interval $u$ of color $i$, we can find an $s_it_i$-path in $G$ with inner vertices that are used to construct $u$. Taking into account the terminal paths that are already included in $\cal P$, we obtain $r_i$ $s_it_i$-paths for each $i\in\{1,\ldots,k\}$.  We have to show that these paths are mutually induced. Because $\cal I$ is an independent set, distinct paths have no adjacent inner vertices. It remains to show that each $u\in {\cal I}$ does not intersect any terminal vertex (interval) of $G$ except the vertices representing $s_i,t_i$. If $u$ is added to $H$ in Step~3, then it follows immediately from the fact that all non-terminal vertices that are adjacent to at least three terminals are deleted in Step~1 and from the description of Step~3. If $u$ is added to $H$ in Step~9, then notice $u$ does not intersect any terminal vertex deleted in Step~4, because we delete them together with adjacent non-terminal vertices. Similarly, it does not interfere with any terminal deleted in Step~5, as proved in Lemma~\ref{l-5}. Moreover, each interval added in Step~9 intersects exactly two remaining terminal vertices that are partners by Step~8. Hence, the instance is a yes-instance.

Now suppose that our instance is a yes-instance. Let $\ell_i = r_i-1$ if $(s_i,t_i)$ is a multi-pair, and let $\ell_i = r_i$ otherwise.
By Observation~\ref{o-trivial}, we can assume that the solution includes all terminal paths. Therefore, the solution contains exactly $\ell_i$ $s_it_i$-path with inner vertices. By Lemma~\ref{l-3} and Lemma~\ref{l-9}, for each such solution $s_it_i$-path $P$, there is a candidate $s_it_i$ path $P'$ such that $I_{P'} \subseteq I_{P}$. Therefore, we can replace each solution path by a candidate path, and obtain a solution that uses only candidate paths. Let $\mc{I}$ denote the set of intervals covered by these paths.
By Lemma~\ref{l-cover}, the intervals of $\mc{I}$ do not intersect each other. Moreover, by construction, $\mc{I}$ contains $\ell_i$ intervals with color $i$. Therefore, $H$ has an independent set as described in Step~10.

We now show how to perform Step~10 in $O(n+m)$ time. We do this by performing the following procedure, which is a modification of the well-known greedy algorithm for finding a largest independent set in an interval graph.

\medskip
\noindent
{\bf 1.} Construct $2n$ buckets $L_1,\ldots,L_{2n}$ and $2n$ buckets $R_1,\ldots,R_{2n}$.

\medskip
\noindent
{\bf 2.} For each vertex $u$ of $H$, put $u$ in the buckets $L_{l_u}$ and $R_{r_u}$. 

\medskip
\noindent
{\bf 3.} Set ${\cal I}=\emptyset$ and $h=2n$. For $i=1,\ldots,k$, set $\ell_i=r_i-1$ if $(s_i,t_i)$ is a multi-pair, and set 
$\ell_i=r_i$ otherwise.

\medskip
\noindent
{\bf 4.}  Scan the buckets $L_{h},\ldots,L_1$ until we find a bucket $L_j$ that contains a vertex $u$ of $H$ of some color $i$ such that $\ell_i>0$. Then $u$ is included in $\cal I$. Find the set of vertices $X$ from the buckets $R_j,\ldots,R_i$, and delete them from $H$. Then  set $\ell_i=\ell_i-1$, $h=j$, and repeat the procedure. We stop as soon as we cannot find the next bucket $L_j$.   

\medskip
\noindent
If $\cal I$ contains less than $\ell_i$ vertices of color $i$ for some $i\in\{1,\ldots,k\}$, then stop and return a {\tt No}-answer.
Otherwise, return ${\cal I}$. 
This procedure takes $O(|V(H)|) = O(n)$ time, and the corresponding paths can be found in $O(n+m)$ time. Hence, it remains to show that the procedure is correct.  
We need the following claim,
which implies that between the left endpoints of two intervals with a color $i$ there can be no left endpoint of an interval with color $j \not=i$.

\medskip
\noindent
{\bf Claim 1.}
{\it Let $U_i,  U_j$ be the set of vertices (intervals) of $H$ colored by distinct colors $i$ and $j$ respectively. Then for any $u\in U_i$ and $v\in U_j$, $l_u\neq l_v$. Moreover,  if $l_u< l_v$ for some  $u\in U_i$ and $v\in U_j$, then $l_x<l_y$ for any $x\in U_i$ and $y\in U_j$.}

\medskip
\noindent
{\em Proof:}
Let $u\in U_i$ and $v\in U_j$. Suppose that $u$ and $v$ are added to $H$ in Step~3 of the algorithm.  Then $l_u\neq l_v$, because $u$ and $v$ are distinct vertices of $G$. Without loss of generality, $l_u< l_v$. 
Note that the intervals of $U_i$ correspond to the non-terminal vertices of $G$ that are adjacent to two adjacent terminal vertices $w_1,z_1$ of $G$ representing $s_i,t_i$ and that are not adjacent to other terminal vertices, by Step~1 and~3. Similarly, the intervals of $U_j$ correspond to the non-terminal vertices of $G$ that are adjacent to two adjacent terminal vertices $w_2,z_2$ of $G$ representing $s_j,t_j$ and that are not adjacent to other terminal vertices. Consider the interval $I=w_1\cap z_1$. Because  $l_u< l_v$,  the left end-point of  any $x\in U_i$ lies to the left of the right end-point of $I$ and the left end-point of  any $y\in U_j$ lies to the right of the right end-point of $I$. Hence, $l_x<l_y$ for any $x\in U_i$ and $y\in U_j$.

Suppose now that $u$ is added to $H$ in Step~3 and $v$ is added to $H$ in Step~9. The intervals of $U_i$ correspond to the non-terminal vertices of $G$ that are adjacent to two adjacent terminal vertices $w_1,z_1$ of $G$ representing $s_i,t_i$ and that are not adjacent to other terminal vertices.  The intervals of $U_j$ are the unions of non-terminal vertices of $G$ and these intervals intersect two non-adjacent terminal intervals $w_2,z_2$ of $G$ representing $s_j,t_j$. Observe that the intervals of $U_i$ could not be used for construction of the intervals of $U_j$ because all non-terminal vertices that are adjacent to $w_1,z_1$ are deleted in Steps~4 and~8. 
Moreover, the intervals of $U_j$ do not intersect any terminal vertex of $G$ except $w_2,z_2$.
Hence, $l_u\neq l_v$. Consider the interval $I=w_1\cap z_1$.
Without loss of generality, $l_u< l_v$. Then the left end-point of  any $x\in U_i$ lies to the left of the right end-point of $I$ and the left end-point of  any $y\in U_j$ lies to the right of the right end-point of $I$. Hence, $l_x<l_y$ for any $x\in U_i$ and $y\in U_j$.

Finally, suppose that $u$ and $v$ are added to $H$ in Step~9 of the algorithm. The intervals of $U_i$ intersect two non-adjacent terminal intervals $w_1,z_1$ of $G$ representing $s_i,t_i$ and they do not intersect other terminal vertices of $G$, and the intervals of $U_j$ intersect two non-adjacent terminal intervals $w_2,z_2$ of $G$ representing $s_j,t_j$ and they do not intersect other terminal vertices of $G$. Recall that the terminals are ordered in Step~7. Hence, we can assume without loss of generality that $r_{w_1}<l_{z_1}\leq l_{w_2}<r_{z_2}$. It remains to observe that each interval of $U_i$ has its left end-point to the left of $r_{w_1}$ and each interval of $U_j$ has its left end-point to the right of $r_{w_1}$. This proves Claim~1.

\medskip
Claim~1 implies that between the left endpoints of two intervals with a color $i$ there can be no left endpoint of an interval with color $j \not=i$. Then, similar as the correctness of the well-known greedy algorithm for finding a largest independent set in an interval graphs, we can argue that the above procedure outputs the required independent set.
\qed
\end{proof}

As each step in our algorithm is safe, we obtain the following result.

\begin{theorem}\label{t-interval}
The \problemRIDP{} problem can be solved in time 
$O(n+m+k)$ for interval graphs on $n$ vertices and $m$ edges with $k$ terminal pairs.
\end{theorem}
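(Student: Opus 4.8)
The plan is to obtain Theorem~\ref{t-interval} as a direct consequence of the safety of all ten steps, established in Lemmas~\ref{l-1}--\ref{l-10}. The algorithm executes Steps~1--10 once, in order, so my proof would have two parts: bounding the total running time and establishing overall correctness by composing the per-step guarantees. The key conceptual point to articulate is that ``safe'' was defined precisely so that these guarantees chain together, so the top-level argument is essentially an assembly of the lemmas rather than a fresh analysis.

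For the running time, I would simply observe that the algorithm performs a constant number (ten) of steps, each of which runs in $O(n+m+k)$ time by the respective safety lemma. The only point requiring care is Step~10: it operates on the auxiliary graph $H$ rather than on $G$, so I must first confirm that $|V(H)| = O(n)$. This follows from condition~(iii) in the definition of a safe step, since the only steps that add intervals to $H$ are Steps~3 and~9, and each adds $O(n)$ intervals (Lemmas~\ref{l-3} and~\ref{l-9}). Hence the greedy procedure in Step~10 runs in $O(|V(H)|) = O(n)$ time, and summing over the constantly many steps gives a total of $O(n+m+k)$.

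For correctness, I would argue along the sequence of steps. By condition~(i) of safeness, no step ever returns a \texttt{No}-answer on a yes-instance, so every \texttt{No}-answer the algorithm outputs is correct. By condition~(ii), each preprocessing step (Steps~1--8) transforms the current instance into one that is equivalent with respect to the existence of a solution, while depositing the forced terminal paths into $\mc{P}$ via Observation~\ref{o-trivial}. After preprocessing, the remaining task is to find, for each surviving pair $(s_i,t_i)$, exactly $\ell_i$ mutually induced solution paths with at least one inner vertex, where $\ell_i = r_i - 1$ for a multi-pair and $\ell_i = r_i$ otherwise. The candidate-path guarantees of Lemmas~\ref{l-3} and~\ref{l-9} ensure that every such solution path $P$ can be replaced by a candidate path $P'$ with $I_{P'} \subseteq I_P$, so it suffices to search among candidate paths; by Lemma~\ref{l-cover} these correspond exactly to an independent set in $H$ with the prescribed color multiplicities, which Step~10 finds (Lemma~\ref{l-10}). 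Combining the paths returned by Step~10 with the terminal paths already placed in $\mc{P}$ then yields a valid solution whenever one exists.

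I expect the main obstacle to be bookkeeping rather than a genuinely hard argument: the delicate point is verifying that the reductions of the requirements $r_i$ (decrementing by one whenever a terminal path is forced) remain consistent across Steps~3--5 and Step~10, so that the color multiplicities sought in $H$ match exactly the number of non-terminal solution paths still needed. Since Lemma~\ref{l-10} already reconciles these counts through the definition of $\ell_i$, the top-level proof reduces to confirming that no terminal path is counted twice and that every deleted terminal pair has had its candidate set (for multi-pairs) fully recorded before deletion; both facts follow from the order in which Steps~3--8 are applied.
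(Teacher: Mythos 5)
Your proposal is correct and follows the same route as the paper, which proves the theorem in one line by observing that every step is safe and then letting the per-step lemmas (running time, correctness of \texttt{No}-answers, instance equivalence, and the $O(n)$ bound on $|V(H)|$) compose; your write-up merely unfolds that composition explicitly. The extra care you take with the requirement counts $\ell_i$ is already absorbed into Lemma~\ref{l-10} in the paper, so no new argument is needed.
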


\section{Circular-Arc Graphs}\label{s-circ}
In this section, we modify the algorithm of the previous section to work for the \problemIDP{} problem on circular-arc graphs. The general idea of the approach remains the same, but some preprocessing steps are no longer needed, and some steps need modification. In particular, we do not need colors here. We will again show that each step of the algorithm is \emph{safe}, where the definition of a safe step remains the same, mutatis mutandis. The algorithm assumes that an arc representation of $G$ is known, as given by Theorem~\ref{t-mcc}. It maintains an auxiliary circular-arc graph $H$, initially empty, in a similar manner and function as before. It also maintains a set $\mc{P}$ of paths, initially empty.

The algorithm first performs Step~1. Note that Step~2 and~3 are not necessary, as there are no multi-pairs now, and thus we do not apply them. We then continue with Step~4 and~5.

\begin{lemma} \label{l-c-145}
Step~1, 4, and~5 are safe.
\end{lemma}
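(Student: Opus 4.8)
The plan is to show that Steps~1, 4, and~5 remain safe in the circular-arc setting by essentially reusing the correctness arguments from Lemmas~\ref{l-1}, \ref{l-4}, and~\ref{l-5}, while carefully identifying the two interval-specific facts they relied on and verifying that each either still holds or is no longer needed. The key observation is that these three steps are purely local: they delete non-terminal vertices adjacent to three or more terminals (Step~1), or terminal vertices whose every represented terminal has its partner in the neighborhood together with the adjacent non-terminals (Step~4), or handle adjacent partner-terminals and their common non-terminal neighbors (Step~5). None of these arguments uses the linear structure of the line, betweenness, or the left-to-right ordering of endpoints; they use only adjacency and the fact that an inner vertex of a solution path touches at most two terminals. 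Since a circular-arc representation still lets us test adjacency in $O(1)$ time and still bounds the degree of a vertex below by (half) its endpoint-span, every local counting argument transfers verbatim.

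First I would note that Step~2 and Step~3 are explicitly not applied here because, for \problemIDP{} on circular-arc graphs, there are no multi-pairs (all $r_i=1$). This simplification is important: several clauses in the original proofs of Lemmas~\ref{l-4} and~\ref{l-5} branch on whether a pair is a multi-pair and invoke Step~3 to justify that candidate sets for multi-pairs were already identified. In the present setting those clauses are vacuous, so the arguments actually shorten. I would restate, for each of the three steps, the correctness claim: (Step~1) a non-terminal adjacent to at least three terminals cannot be an inner vertex of any solution path, hence is deletable; (Step~4) terminals all of whose partners are neighbors admit terminal paths by Observation~\ref{o-trivial}, and the adjacent non-terminals cannot be reused; (Step~5) adjacent-partner pairs are resolved by their terminal paths, and their common non-terminal neighbors are unusable. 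Each of these is phrased entirely in terms of adjacency and Observation~\ref{o-trivial}, both of which hold identically for circular-arc graphs.

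The place that needs genuine care—and what I expect to be the main obstacle—is the running-time analysis rather than the correctness. The original proof of Lemma~\ref{l-3} established an $O(n+m)$ bound on filling buckets by using the interval identity $\deg_G(u)=r_u-l_u-1$ and the telescoping sum over the line. On a circle, the relation between an arc's endpoint-span and its degree is only an inequality (the preliminaries state the degree is at least $(r_u-l_u-1)/2$), and arcs may wrap around the point where the clockwise enumeration restarts. I would therefore either (a) avoid the bucket construction entirely for Steps~1, 4, and~5—since Step~1 only needs to count each non-terminal's terminal-neighbors via its adjacency list, Step~4 only scans the terminal list and the adjacency lists of $Z$, and Step~5 only inspects common neighbors of adjacent terminal pairs, none of which require the bucketing from Step~3—or (b) argue that the weaker circular inequality still yields $\sum_u (r_u-l_u) = O(n+m)$ up to constants, which suffices. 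Option~(a) is cleanest: each of the three steps runs in $O(n+m+k)$ by the same adjacency-list and terminal-list traversals as before, with no dependence on the geometry of the representation. I would then conclude that, since correctness follows from the (suitably shortened) arguments of Lemmas~\ref{l-1}, \ref{l-4}, and~\ref{l-5} and the time bounds follow from direct list traversals, all three steps are safe.
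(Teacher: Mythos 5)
Your proposal is correct and matches the paper's treatment, which simply states that the proofs of Lemmas~\ref{l-1}, \ref{l-4}, and~\ref{l-5} carry over unchanged; your extra care in noting that the multi-pair clauses become vacuous and that the bucket-based timing argument belongs to Step~3 (and is therefore not needed here) is sound but not required. No gap.
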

The proof of this lemma is obtained in the same way as the proofs of Lemmas~\ref{l-1}, \ref{l-4}, and~\ref{l-5}.

After Step~5, for each remaining terminal pairs $(s_i,t_i)$, $s_i$ and $t_i$ are represented by vertices at distance at least two, and as before, we call such pairs \emph{long}. Let $k'$ be the number of remaining terminal pairs. Notice that it can happen that $k'\leq 1$ after Step~5. It is convenient to handle this case separately.

\newcommand{\sfiv}{5$^{+}$}

\medskip
\noindent
{\bf Step \sfiv{}.} If $k'=0$, then stop and return the solution $\cal P$. If $k'=1$, then consider the terminal vertices $u$ and $v$ representing the terminals of the unique pair of $T$. Find a shortest $uv$-path $P$ if it exists. If $P$ exists, then add $P$ to $\cal P$, and return the solution $\cal P$. Otherwise, stop and return a {\tt No}-answer.

\begin{lemma}\label{l-iv}
Step~\sfiv{} is safe.
\end{lemma}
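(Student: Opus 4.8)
The plan is to treat the two cases $k'=0$ and $k'=1$ separately, and in both cases to rely on Lemma~\ref{l-c-145}, which guarantees that after Steps~1,~4, and~5 the current instance is equivalent to the original one and that ${\cal P}$ collects exactly the terminal paths of the pairs resolved during preprocessing. The running time is immediate: deciding which case applies costs $O(k)$, and a shortest $uv$-path (if needed) can be found by a single breadth-first search in $O(n+m)$ time, so the step runs in $O(n+m+k)$ time. It therefore remains to verify correctness, that is, that a \texttt{No}-answer is returned only for no-instances and that every returned ${\cal P}$ is a genuine solution.

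For $k'=0$ I would argue that no long pairs remain, so every terminal pair was resolved in Step~4 or~5 and its terminal path placed in ${\cal P}$. Each such terminal path is a single edge and hence has no inner vertices, so conditions~(i) and~(iii) of being mutually induced are vacuous and condition~(ii) holds because both vertices of an edge are ends; thus the terminal paths in ${\cal P}$ are automatically mutually induced and form a complete solution. Returning ${\cal P}$ is therefore correct, and the instance is in fact always a yes-instance in this case.

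For $k'=1$, let $(s,t)$ be the unique remaining (long) pair, represented by non-adjacent vertices $u$ and $v$. First I would observe that a shortest $uv$-path $P$ is automatically an induced path with non-adjacent ends, so its inner vertices form a valid chordless $uv$-path. The crux is to show that $P$, together with the terminal paths already in ${\cal P}$, is mutually induced, so that ${\cal P}\cup\{P\}$ is a solution; conversely, any solution must contain some $uv$-path, so if no $uv$-path exists the instance is a no-instance and the \texttt{No}-answer is justified. Existence of a $uv$-path is exactly what the breadth-first search detects, which settles the decision.

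The main obstacle is the mutual-inducedness of $P$ with the resolved terminal paths, and I would handle it through a structural observation about the state of $G$ after Step~5. Because each remaining terminal vertex represents a terminal of a long pair and $k'=1$, the only terminal vertices surviving in $G$ are $u$ and $v$ themselves. Moreover, for any terminal path $ab\in{\cal P}$, each of $a,b$ is either deleted in Step~4 as a member of the set $Z$ (in which case all its non-terminal neighbours were deleted as well, so no surviving vertex, and in particular no inner vertex of $P$, is adjacent to it) or survives, in which case it must coincide with $u$ or $v$ and hence is a shared end-vertex of both paths, for which adjacency is explicitly permitted by condition~(iii). Since the inner vertices of $P$ are surviving non-terminal vertices, this case analysis shows that no inner vertex of $P$ is adjacent to a vertex of a terminal path except at a common end, and that $P$ shares no inner vertex with any terminal path; hence ${\cal P}\cup\{P\}$ is mutually induced. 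Establishing this structural fact cleanly---tracking exactly which terminal-path vertices can survive preprocessing---is the one part of the argument that needs care.
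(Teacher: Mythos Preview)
Your proposal is correct and follows essentially the same approach as the paper. Both arguments verify mutual inducedness of $P$ with the terminal paths in ${\cal P}$ by tracking what happens to the endpoints of those terminal paths during Steps~4 and~5; your dichotomy ``each endpoint of a terminal path is either in $Z$ (and hence has all its non-terminal neighbours deleted) or survives and must equal $u$ or $v$'' is a clean reformulation of what the paper states somewhat more tersely when it splits into the Step~4 and Step~5 cases. The one point you flag as needing care---that a surviving terminal-path endpoint must still be a terminal vertex and hence equal $u$ or $v$---is indeed the crux, and it follows because any vertex not in $Z$ represents at least one terminal whose partner is at distance at least~$2$, and that long pair survives both Step~4 and Step~5.
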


\begin{proof}
It is clear that Step~\sfiv{} can be executed in $O(n+m)$ time. The cases that $k'=0$ and that $k'=1$ and $P$ does not exist are trivially correct. If $k'=1$ and $P$ does exist, then $P$ cannot have any inner (non-terminal) vertices that are adjacent to the terminal vertices that are deleted in Step~4, because any such non-terminal vertices are deleted as well. Moreover, $P$ cannot have any inner (non-terminal) vertices that are adjacent to the terminals that are deleted in Step~5, as any such non-terminal vertex would either be adjacent to three terminals and thus removed in Step~1, or be adjacent to a terminal vertex of the single remaining terminal pair.
\qed
\end{proof}

Now we can assume that $k' \geq 2$. Since all pairs are long and $k'\geq 2$, there is only one direction around the circle that a solution path can go, and therefore, intuitively, the problem starts to behave roughly as it does on interval graphs. We perform Step~6, 7, 8, and~9, where in Step~9 we do not color the vertices.

\begin{lemma}\label{l-c-6789}
Steps~6, 7, 8, and~9 are safe. Moreover, for $i=1,\ldots,k'$, if $P$ is a solution $s_it_i$-path, then there is a candidate $s_it_i$-path $P'$ with $I_{P'} \subseteq I_{P}$.
\end{lemma}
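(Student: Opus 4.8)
The plan is to argue that the circular-arc setting, once we know $k'\geq 2$ and all remaining pairs are long, has been reduced to essentially the interval setting, so that the proofs of Lemmas~\ref{l-6}, \ref{l-7}, \ref{l-8}, and~\ref{l-9} carry over with only minor adjustments. The key observation to establish first is that when $k'\geq 2$ and every pair is long, any solution path for a pair $(s_i,t_i)$ must traverse the circle along one fixed direction: because there are at least two long pairs and, by Lemma~\ref{l-cover}-style reasoning adapted to the circle, the covered arcs $I_{P_a}$ are pairwise disjoint, the terminals partition the circle so that no solution path can ``wrap around'' in both directions. I would state and prove this directionality claim as the opening step, since every subsequent argument relies on it to treat arcs as if they were intervals laid out linearly along the forced direction.

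Granting the directionality claim, I would then verify the four steps in order. For Step~6, the argument from Lemma~\ref{l-6} transfers verbatim: a terminal vertex representing three terminals of long pairs would force, via the three inner vertices $x_1,x_2,x_3$ intersecting $u$ and the middle one lying between the other two, an adjacency that contradicts some pair being long---this is a purely local argument about arcs meeting at a common arc $u$, so circularity does not interfere. For Step~7, I would use the directionality claim to fix a reference point (or cut-point) on the circle so that ``$l\leq$'' comparisons are well-defined along the forced direction; the disjointness of covered arcs (the circular analogue of Lemma~\ref{l-cover}(ii)) then yields the same contradiction from $l_u\leq l_v<l_w$ as in the interval case, and the bucket-sorting implementation is unchanged. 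Step~8 deletes non-terminal vertices adjacent to both $u$ and $v$ (with $u,v$ representing $t_i,s_{i+1}$) and is safe because such a vertex can never be an inner vertex of a solution path, by exactly the definitional argument of Lemma~\ref{l-8}; no circular subtlety arises. For Step~9 (now without colors) I would replay the case analysis of Lemma~\ref{l-9} on path length $2$, $3$, and $\geq 4$, noting that the connected-component analysis in Step~9c for $G-(N[u]\cup N[v])$ is confined to the vertices lying between $r_u$ and $l_v$ along the forced direction, which by the directionality claim is exactly the region any solution path must occupy.

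The main obstacle, and the one step deserving genuine care, is making the linear line of ``betweenness'' and ``leftmost/rightmost'' reasoning rigorous on a circle, where $l_u<r_u$ and betweenness are only meaningful relative to a chosen origin. The cleanest route is to let the directionality claim supply a canonical starting terminal, cut the circle at a point guaranteed to lie outside every solution path (for instance just counterclockwise of the first terminal under the Step~7 ordering), and thereby unroll the circle into a line; after this cut all the interval-graph arguments and running-time bookkeeping apply unchanged. I would therefore devote the bulk of the proof to justifying that such a safe cut-point exists and that no candidate or solution arc crosses it, since that single fact is what licenses the phrase ``the problem starts to behave roughly as it does on interval graphs'' and lets us invoke Lemmas~\ref{l-6}--\ref{l-9} by reference. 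The final ``moreover'' about the existence of a candidate path $P'$ with $I_{P'}\subseteq I_P$ then follows exactly as in Lemma~\ref{l-9}, since Step~9 constructs candidates by the same leftmost/rightmost selection applied to the unrolled line.
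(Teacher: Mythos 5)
Your proposal is correct and follows essentially the same route as the paper: the paper's proof simply invokes Lemmas~\ref{l-6}, \ref{l-7}, \ref{l-8}, and~\ref{l-9}, relying on the (informally stated) observation that with $k'\geq 2$ long pairs every solution path is forced to go one way around the circle, which is exactly the directionality claim you make explicit and justify. The one point where you are less careful than the paper is Step~7: you assert the bucket-sorting implementation is ``unchanged'' and propose to cut the circle ``just counterclockwise of the first terminal under the Step~7 ordering,'' but that ordering is the output of Step~7, so the cut-point cannot be chosen from it; the paper instead runs the bucket scan from the arbitrary origin of the arc representation and, if the first two non-empty buckets do not contain partnered terminals, restarts once with the first non-empty bucket moved to position $B_{2n+1}$. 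This is a minor implementation repair rather than a mathematical gap, and your directionality argument otherwise licenses everything you claim, including the ``moreover'' part via the Lemma~\ref{l-9} case analysis.
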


\begin{proof}
The lemma follows immediately from Lemmas~\ref{l-6}, \ref{l-7}, \ref{l-8}, and~\ref{l-9}. Notice that in the proof of Lemma~\ref{l-7}, we need to be slightly careful: if the first two non-empty buckets contain terminals from different terminal pairs, then since we are dealing with circular-arc graphs, this does not immediately mean that we should return a \texttt{No}-answer. Instead, we should restart the procedure with the second non-empty bucket, and move the first non-empty bucket to the end of the list (as bucket $B_{2n+1}$).
\qed\end{proof}

Finally, we execute the following simplified version of Step~10.

\newcommand{\sten}{10$^{*}$}

\medskip
\noindent
{\bf Step \sten{}.}  Find a largest independent set in $H$ using Theorem~\ref{t-maxin}. If such a set exists, add the corresponding candidate paths to $\mc{P}$ and return $\mc{P}$. Otherwise, return a {\tt No}-answer.

\begin{lemma}\label{l-c-10}
Step~\sten{} is safe.
\end{lemma}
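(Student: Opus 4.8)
The plan is to establish correctness and running time separately, exactly as in the template set by Lemma~\ref{l-10}, but exploiting the two simplifications available in the circular-arc setting: there are no colors (every solution path carries the ``same'' color since there are no multi-pairs), and we may therefore ask for a \emph{largest} independent set in $H$ rather than a color-constrained one. Running time is immediate: by Lemma~\ref{l-c-6789} and the earlier steps, $H$ has $O(n)$ vertices, its arc end-points are already sorted, and so by Theorem~\ref{t-maxin} a largest independent set is found in $O(n)$ time; recovering the corresponding candidate paths costs $O(n+m)$. So the heart of the matter is correctness.

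For correctness I would prove the biconditional: the instance is a \texttt{Yes}-instance if and only if $H$ has an independent set of size $k'$, i.e.\ a largest independent set of $H$ has size at least $k'$ (and in fact exactly $k'$, since each pair contributes at most one solution path when $r_i=1$). First I would argue the forward direction using the ``covering'' reduction already proved in Lemma~\ref{l-c-6789}: given a solution, each remaining terminal pair $(s_i,t_i)$ contributes a solution $s_it_i$-path $P$, which by Lemma~\ref{l-c-6789} may be replaced by a candidate path $P'$ with $I_{P'}\subseteq I_P$; by Lemma~\ref{l-cover}(ii) these $k'$ covered arcs are pairwise disjoint, hence pairwise non-adjacent in $H$, giving an independent set of size $k'$. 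For the reverse direction, I would take an independent set $\mathcal{I}$ of size $k'$ in $H$, turn each arc back into an $s_it_i$-path in $G$, and verify the three mutually-induced conditions: pairwise non-adjacency of inner vertices follows from independence in $H$, and non-interference with terminal arcs follows from the guarantees built into Steps~4, 5, 8, and~9 (each arc added in Step~9 intersects exactly the two terminal arcs that are its partners), exactly as argued in Lemma~\ref{l-10}.

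The one genuine obstacle, and the place where the circular-arc argument diverges from the interval argument, is that asking only for a \emph{largest} independent set works \emph{because} there is a single effective color here. I must rule out the possibility that a largest independent set of $H$ contains two candidate arcs for the same pair $(s_i,t_i)$ while omitting an arc for some other pair $(s_j,t_j)$, so that the set has size $k'$ but does not assign one path to each pair. The clean way to handle this is to observe that, after Step~7's ordering and Step~9's construction, the candidate arcs for a fixed pair all share the ``slot'' between $r_{u_i}$ and $l_{v_i}$ and are pairwise intersecting (each contains that intersection region), so no independent set can contain two arcs of the same pair; hence any independent set of size $k'$ automatically contains exactly one arc per pair. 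I would prove this pairwise-intersection claim as a short sub-observation — mirroring Claim~1 in the proof of Lemma~\ref{l-10}, which separates the left end-points of different pairs — and then invoke it to conclude that a largest independent set has size $k'$ precisely when one candidate path can be selected per pair, which is exactly the \texttt{Yes}-instance condition.
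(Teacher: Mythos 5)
Your proposal is correct and follows essentially the same route as the paper's proof: the paper also reduces correctness to the biconditional ``yes-instance iff $H$ has an independent set of size $k'$,'' handles the direction from a solution via Lemma~\ref{l-c-6789} and disjointness of covered arcs, and resolves your ``one genuine obstacle'' by observing that the vertices of $H$ partition into sets $X_1,\ldots,X_{k'}$ whose members all contain the points $r_u$ and $l_v$ of the terminal arcs for pair $i$ (hence are pairwise intersecting), so any independent set of size $k'$ picks exactly one arc per pair. The running-time argument via Theorem~\ref{t-maxin} is likewise identical.
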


\begin{proof}
A largest independent set can be found in $O(n)$ time using Theorem~\ref{t-maxin}. Then the corresponding paths can be found in $O(n+m)$ time. To prove that Step~\sten{} is correct, we prove that the instance is a yes-instance if and only if $H$ has an independent set of size at least $k'$.

Suppose that $\cal I$ is an independent set of $H$ of size at least $k'$. By the construction of $H$, the set of vertices of $H$ can be partitioned into $k'$ sets $X_1,\ldots,X_{k'}$ such that for each $i\in\{1,\ldots,k'\}$, $X_i$ contains only intervals that intersect the vertices $u,v$ representing $s_i,t_i$, respectively, in $r_u$ and $l_v$. Hence, $\cal I$ has exactly one vertex from each $X_1,\ldots,X_{k'}$.   
For each interval $w$ in $\cal I$ from $X_i$, we can find an $s_it_i$-path in $G$ with inner vertices that are used to construct $w$. Taking into account the paths that are already included in $\cal P$, we obtain $s_it_i$-paths for each $i\in\{1,\ldots,k\}$.  We have to show that these paths are mutually induced. Because $\cal I$ is an independent set, distinct paths 
have no adjacent inner vertices. It remains to show that each $w\in {\cal I}$ does not intersect any terminal vertex (interval) of $G$ except the vertices representing $s_i,t_i$.
 Notice that $w$ does not intersect any terminal vertex deleted in Step~4, because we delete them together with adjacent non-terminal vertices. Similarly, as argued in Lemma~\ref{l-5}, $w$ does not interfere with any terminals deleted in Step~5.
Recall that non-terminal vertices that are adjacent  to at least three distinct terminal vertices are deleted in Step~1. By Step~8 and the fact that the common neighbors of two terminals are deleted in the first phase of the construction of $H$ in Step~9a, we obtain that $w$ does not intersect any terminal except $s_i,t_i$. Hence, the instance is a yes-instance.

Suppose now that we have a yes-instance of \problemIDP{} and consider a solution to the instance. By Observation~\ref{o-trivial}, we can assume that the solution includes all terminal paths from $\cal P$. We consider remaining $k'$ paths that have inner vertices. By Lemma~\ref{l-c-6789}, for each solution $s_it_i$-path $P$, there is a candidate $s_it_i$-path with $I_{P'} \subseteq I_{P}$. Hence, we may assume that each solution path is a candidate path. 
Let $\cal I$ be the set of intervals covered by these paths.
Because the paths are mutually induced, the intervals of $\cal I$ do not intersect each other. Hence, $H$ has an independent set of size $k'$.
\qed
\end{proof}

As each step in our algorithm is safe, we obtain the following result.
\begin{theorem}\label{t-circ}
The \problemIDP{} problem can be solved in time 
$O(n+m+k)$ for circular-arc graphs on $n$ vertices and $m$ edges with $k$ terminal pairs.
\end{theorem}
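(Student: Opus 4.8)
The plan is to establish Theorem~\ref{t-circ} by verifying that the circular-arc algorithm consists entirely of safe steps, exactly as was done for the interval case in Theorem~\ref{t-interval}. Since a step is \emph{safe} precisely when it runs in $O(n+m+k)$ time and preserves the yes/no status of the instance (while correctly building the candidate set in $H$), the theorem follows by composing all the safe steps: the total running time is a finite sum of $O(n+m+k)$ terms, and correctness is transferred through the chain of equivalences guaranteed by property~(ii) of safeness. Concretely, I would first invoke Lemma~\ref{l-c-145} to dispatch Steps~1,~4, and~5, then Lemma~\ref{l-iv} for Step~\sfiv{} (which also disposes of the degenerate cases $k'\leq 1$), then Lemma~\ref{l-c-6789} for Steps~6--9 under the standing assumption $k'\geq 2$, and finally Lemma~\ref{l-c-10} for Step~\sten{}.

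The key structural observation that makes the composition go through is the reduction, after Step~\sfiv{}, to the regime $k'\geq 2$ with all terminal pairs \emph{long}. As the surrounding text notes, once there are at least two long pairs there is only a single feasible direction around the circle for any solution path, so the circular instance behaves essentially like an interval instance. This is what allows Steps~6--9 to be reused verbatim (via Lemma~\ref{l-c-6789}), with the only delicate point being the wrap-around adjustment in the bucket scan of Step~7, which Lemma~\ref{l-c-6789} already handles by restarting the scan and appending the first bucket as $B_{2n+1}$. I would then emphasize that the candidate-path guarantee ``$I_{P'}\subseteq I_{P}$'' carried in Lemma~\ref{l-c-6789} is exactly the ingredient the final independent-set step needs.

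The crux of correctness lives in Step~\sten{}, and I would lean entirely on Lemma~\ref{l-c-10} here: because there are no multi-pairs and no colors, the requirement degenerates to finding \emph{any} independent set of size $k'$ in $H$, which by the partition of $V(H)$ into the $k'$ blocks $X_1,\ldots,X_{k'}$ is equivalent to finding a \emph{largest} independent set and checking that it has size at least $k'$. This is where Theorem~\ref{t-maxin} supplies the $O(n)$ running time, given that the arc end-points are already sorted as $1,\ldots,2n$ by Theorem~\ref{t-mcc}. The forward and backward directions of the equivalence are precisely the two halves of the proof of Lemma~\ref{l-c-10}, so no new argument is required at the level of the theorem.

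The main obstacle — already absorbed into the lemmas rather than the theorem — is twofold. First, one must be certain that dropping the coloring is legitimate, i.e.\ that in the absence of multi-pairs each block $X_i$ contributes exactly one path and hence a plain maximum independent set suffices; this is where I would double-check that Step~\sten{} cannot accidentally select two intervals serving the same pair or zero intervals for some pair, which is ruled out by the block structure and the size-$k'$ lower bound. Second, one must confirm the circular nature does not break the disjointness reasoning of Lemma~\ref{l-cover}; the $k'\geq 2$ reduction is exactly what guarantees a well-defined cyclic order so that ``lies between'' and non-intersection behave as on the line. Granting these, the theorem is immediate: every step is safe, so the algorithm runs in $O(n+m+k)$ time and is correct.
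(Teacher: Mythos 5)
Your proposal matches the paper's own argument: Theorem~\ref{t-circ} is proved there exactly by observing that every step is safe and composing Lemmas~\ref{l-c-145}, \ref{l-iv}, \ref{l-c-6789}, and~\ref{l-c-10}, with the $k'\geq 2$ reduction and the maximum-independent-set computation via Theorem~\ref{t-maxin} playing the roles you describe. No substantive difference.
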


\section{Conclusion}\label{s-con}
We gave a linear-time algorithm for the \problemRIDP{} problem on interval graphs,
and for the \problemIDP{} problem on circular-arc graphs. It can be observed that by the application of the same ideas, we can solve 
\problemRIDP{} on $n$-vertex circular-arc graphs in time $O(n^2)$.  
We leave it as an open question, whether \problemRIDP{} can be solved in linear time for this graph class.

Another interesting question is whether the multicolored independent set problem that we solve in Step~10 of the algorithm can be solved in polynomial time on interval graphs when no order on the colors is known. In the appendix, we answer this question negatively.

\appendix

\section{\problemMCIS}
In Step~10 of the algorithm for interval graphs, we solve an instance of a generalization of the following problem:
\begin{center}
\begin{boxedminipage}{1\textwidth}
\problemMCIS{}\\[3mm]
\begin{tabular}{r l }
\textit{Instance:} & a graph $G$, an integer $k$, and a function $c : V(G) \rightarrow \{1,\ldots,k\}.$\\
\textit{Question:} & does $G$ have an independent set $I$ with $\bigcup_{v \in I} c(v) = \{1,\ldots,k\}$?  \\
\end{tabular}
\end{boxedminipage}
\end{center}
In Step~10, we essentially show that such an instance can be solved in polynomial time on interval graphs if for any two vertices $u,w$ with $c(u)=c(w)=i$ there is no vertex $v$ with $c(v)=j$ and $l_u < l_v < l_w$. However, on general interval graphs, this problem becomes NP-complete.

\begin{theorem}
\problemMCIS{} on interval graphs is NP-complete.
\end{theorem}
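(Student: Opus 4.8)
Membership in \NP{} is immediate: a certificate is the independent set itself, and (given the interval representation) one checks in polynomial time that the chosen intervals are pairwise disjoint and that every color in $\{1,\ldots,k\}$ occurs. For the hardness direction the plan is to reduce from \textsc{3-Partition}, which is strongly \NP-complete and hence remains \NP-complete when the item sizes are written in unary; I may therefore assume the target sum $B$ is bounded by a polynomial in the input size. Recall that an instance consists of $3q$ positive integers $a_1,\ldots,a_{3q}$ with $B/4<a_i<B/2$ for all $i$ and $\sum_i a_i=qB$, and asks whether the items split into $q$ triples each summing to exactly $B$. The reason I would pick a packing problem rather than, say, \textsc{3-Sat} is that packing intervals on a line is already one-dimensional, so there is no need to ``route'' incidence information across the line, which is the feature that makes most interval constructions awkward.

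\textbf{Construction.} I would lay out $q$ pairwise-disjoint \emph{bins} along the line, with bin $j$ the length-$B$ window starting at position $(j-1)(B+1)$, and place a unit-length \emph{divider} interval exactly filling each of the $q-1$ gaps between consecutive bins. Each item $a_i$ gets its own color, whose vertices are the intervals $[\,(j-1)(B+1)+p,\ (j-1)(B+1)+p+a_i\,]$ for every bin $j$ and every integer offset $p\in\{0,\ldots,B-a_i\}$, i.e.\ one placed copy of item $i$ for each legal slot inside each bin. Each divider also gets its own color, but with a \emph{single} interval (the one filling its gap). This produces $k=4q-1$ colors and, since $B$ is polynomially bounded, polynomially many intervals; by definition the resulting graph is an interval graph, and endpoints can be perturbed to be distinct without altering any intersection.

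\textbf{Correctness and the main obstacle.} Because each divider color has a unique interval, any color-covering independent set is forced to contain that interval; since item intervals exist only inside bins and must avoid the dividers, every chosen item interval lies wholly within one bin and none can straddle a boundary. The forward direction is routine: a valid triple partition places the three items of each triple disjointly inside the corresponding bin. The hard part, which I expect to be the crux, is the backward direction, where the definition permits both empty gaps and several intervals of a single color, so I cannot read off a partition directly. Here a counting argument does the work: every item color contributes at least one interval, and all item intervals are disjoint and confined to a total bin length of exactly $qB=\sum_i a_i$, so the total length used is simultaneously at least and at most $qB$; equality forces exactly one interval per item color and no empty space in any bin. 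Finally, the bound $B/4<a_i<B/2$ makes any two items sum to less than $B$ and any four to more than $B$, so an exactly full bin must hold precisely three items summing to $B$, which is the desired \textsc{3-Partition} solution. Combining membership with this reduction yields the theorem.
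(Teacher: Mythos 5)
Your reduction is correct, but it takes a genuinely different route from the paper. The paper reduces from \textsc{3-Sat}: for each variable it creates an adjacent pair of vertices sharing one color (forcing a truth assignment), and for each clause three vertices of a fresh color hanging off the corresponding literal vertices; the resulting graph is a disjoint union of double stars, so the paper in fact proves the stronger statement that \problemMCIS{} is \NP-complete already on that very restricted subclass of interval graphs, with a reduction that is purely combinatorial and needs no numerical encoding. You instead reduce from \textsc{3-Partition}, exploiting the geometry of the line: unique-colored dividers pin down the bins, and a tight length-counting argument forces exactly one placement per item color and a perfect packing, after which $B/4<a_i<B/2$ yields triples. This is a valid and rather natural argument for a one-dimensional packing structure, and your handling of the backward direction (the counting step ruling out repeated colors and slack) is the right way to close the main loophole. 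The trade-offs: your approach needs strong \NP-completeness of \textsc{3-Partition} to keep the instance polynomial-size, produces a denser interval graph, and does not give the paper's ``double stars'' strengthening. The one detail you should nail down is the endpoint convention: as literally written, consecutive items packed flush in a bin, and items flush against a divider, share closed-interval endpoints and would be adjacent, which would break the forward direction; the fix is standard (shrink each interval slightly at both ends, or map $[l,r]$ to $[2l,2r-1]$ before perturbing), but saying the perturbation happens ``without altering any intersection'' glosses over the fact that touching must be turned into non-intersection for the construction to work.
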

\begin{proof}
We show in fact that the problem is already NP-complete on disjoint unions of double stars (\ie graphs obtained from two disjoint stars by joining the central vertices), which form a subclass of interval graphs. We reduce from \textsc{3-SAT}. Consider an instance of \textsc{3-SAT} with $n$ variables $x_1,\ldots,x_n$ and $m$ clauses $C_1,\ldots,C_m$. We construct a graph $G$ and a function $c$ as follows. For each $x_i$, we create two adjacent vertices $x_i$ and $\bar{x}_i$ with $c(x_i) = c(\bar{x}_{i}) = i$. For each $C_j$, we create three vertices and set $c(\cdot)$ of these vertices to $j+n$. We then make these three vertices adjacent to the corresponding literal vertices (for example, if $C_j$ contains $x_i,\bar{x}_j,x_l$, then we join the first vertex with the vertex $x_i$, the second with $\bar{x}_j$ and the third with $x_l$). This completes the construction. Note that it is indeed a disjoint union of double stars. The correctness can be seen as follows: we set $x_i$ to true if and only if the vertex $x_i$ is not in the independent set.
\qed\end{proof}
It is easy to show that \problemMCIS{} is fixed-parameter tractable on interval graphs: guess an ordering of the colors, and for each choice, run a procedure similar to the one described for Step~10. A faster algorithm can be obtained using dynamic programming.
\end{document}